\newtheorem{theorem}{Theorem}[section]
\newtheorem{lemma}[theorem]{Lemma}
\theoremstyle{definition}
\newcommand{\kibitz}[2]{\ifnum\Comments=1{\color{#1}{#2}}\fi}
\newcommand{\vv}{\mathbf{v}}
\newcommand{\eq}{\text{EQ}}
\newcommand{\SW}{\text{SW}}
\newcommand{\DI}{\text{DI}}
\newcommand{\PoA}{\text{PoA}}
\newcommand{\PoS}{\text{PoS}}
\newcommand{\calG}{\mathcal{G}}
\title{\bf Swap Stability in Schelling Games on Graphs\thanks{This work has been supported by the European Research Council (ERC) under grant number 639945 (ACCORD), and by the EPSRC International Doctoral Scholars Grant EP/N509711/1.}}
\author{Aishwarya Agarwal, Edith Elkind, Jiarui Gan, and Alexandros A. Voudouris}
\date{Department of Computer Science, University of Oxford}
\begin{document}

\maketitle

\begin{abstract}
We study a recently introduced class of strategic games 
that is motivated by and generalizes Schelling's well-known residential segregation model.
These games are played on undirected graphs, with the set of agents 
partitioned into multiple types; each agent either occupies a node of the graph
and never moves away or aims to maximize the fraction of her neighbors
who are of her own type. We consider a variant of this model that we call swap Schelling games,
where the number of agents is equal to the number of nodes of the graph, and  
agents may {\em swap} positions with other agents to increase their utility.
We study the existence, computational complexity and quality of 
equilibrium assignments in these games, both from a social welfare perspective 
and from a diversity perspective.
\end{abstract}

\section{Introduction}
Segregation is observed in many communities; people tend to group together on the basis 
of politics, religion, or socioeconomic status. 
This phenomenon has been extensively documented in residential 
metropolitan areas, where people may select where to live based on the 
racial composition of the neighborhoods. To formalize and study how the motives 
of individuals lead to residential segregation, \citet{S69,S71} 
proposed the following simple, yet elegant model. There are two types of agents who are to be 
placed on a line or a grid. An agent is happy with her location if at least a fraction 
$\tau \in (0,1]$ of the agents within a certain radius are of the same type as her. 
Happy agents do not want to move, but unhappy agents are willing to do so in hopes 
of improving their current situation. Schelling described a dynamic process 
where at each step unhappy agents jump to random open positions or swap positions 
with other randomly selected agents, and showed that it can lead to a 
completely segregated placement, even if the agents themselves are tolerant of
mixed neighborhoods ($\tau < 1/2$).

Over the years, Schelling's work became very popular among researchers in Sociology and 
Economics, who proposed and studied numerous variants of his model, mainly via agent-based 
simulations; see the paper of \citet{CF08} and references therein for examples of this 
approach. Variants of the model have also been rigorously analyzed in a series of 
papers \citep{Y01,Z04a,BIKK12,BEL14,Bhakta2014clustering,BEL15,IKLZ17}, which showed that the 
random behavior of the agents leads with high probability to the formation of large 
monochromatic regions.

While all these papers focused on settings where the agents' behavior is 
random, it is more realistic to assume instead that the agents are {\em strategic} 
and move only when they have an opportunity to improve their situation. So far,  
only a few papers have followed such a game-theoretic approach.  
In particular, \citet{Z04b} considered a game where the agents optimize a 
single-peaked utility function, and very recently, \citet{CLM18}, \citet{Elkind2019jump} and 
\citet{Echzell2019dynamics} studied strategic settings that are closer to the original model 
of Schelling, but capture more that two agent types and richer graph topologies.

In particular, \citet{CLM18} study a setting with two types of agents, who have preferred 
locations, and can either swap with other agents or jump to empty positions. For a given 
tolerance threshold $\tau \in (0,1]$, each agent's primary goal is to maximize 
the fraction of her neighbors that are of her own type as long as this fraction is below $\tau$
(with all fractions above $\tau$ being equally good); 
her secondary goal is to be as close as possible to her preferred location. 
For both types of games (swap and jump), Chauhan {\em et al.}~identify values of $\tau$ 
for which the best response dynamics of the agents leads to an equilibrium 
when the topology is a ring or a regular graph.
\citet{Echzell2019dynamics} strengthen these results and extend them to more than two
agent types, as well as study the complexity of computing assignments that maximize 
the number of happy agents. 

\citet{Elkind2019jump} consider a similar model with $k$ types; 
however, they treat agents' location preferences differently from Chauhan {\em et al.}
Namely, in their model each agent is either stubborn 
(i.e., has a preferred location and is unwilling 
to move) or strategic (i.e., aims to maximize the fraction of her neighbors 
that are of her own type; this corresponds to setting $\tau=1$ in the model of Chauhan {\em et al.}). 
They focus on jump games, i.e., games where agents may jump to empty positions, 
and analyze the existence and complexity of computing Nash equilibria, as well as prove bounds on
the price of anarchy~\cite{KP99} and the price of stability~\cite{ADKTWR08}. 

\subsection{Our Contribution}
We combine the two approaches by considering swap games in the model of \citet{Elkind2019jump}. 
That is, we assume that the number of agents is equal to the number of nodes in the 
topology, and two agents can swap locations if each of them prefers the other agent's
location to her own.
We begin by studying the existence of equilibrium assignments. 
While such assignments exist for highly structured topologies, 
we show that they may fail to exist in general, even for simple topologies such as trees. 
Moreover, we show that deciding whether an equilibrium exists is NP-complete.  
We also study the quality of assignments in terms of their social welfare: 
we prove bounds on the price of anarchy and the price of stability for many interesting cases, 
and show that computing an assignment with high social welfare is NP-complete; the latter
result complements the result of Elkind {\em et al.}~in that it applies to the case where the number
of agents equals the number of nodes in the topology. 

Given that the goal of Schelling's work was to study integration, 
it is natural to ask what level of integration can be achieved at equilibrium.
There is a number of integration indices that have been proposed 
for this purpose (see, e.g., the survey of \citet{MD88}). 
However, many of the indices defined in the literature are formulated
for settings where the topology is highly regular and there are only two agent types, 
and it is not immediately clear how to adapt them to our model.
We therefore focus on a simple index, which we call the {\em degree of integration},
that is inspired by the work of \citet{LC82} and admits a natural interpretation in our context. 
This index counts the number of agents who are exposed to agents of other types,  
i.e., have at least one neighbor of a different type. 
We then study the price of anarchy and the price of stability with respect
to this index: that is, we compare the value
of our index in the best and worst equilibrium of our game to the optimal value
of this index that can be achieved for a given instance. We note that, to the best
of our knowledge, this is the first result of this type in the context of Schelling games:
the previous work on integration in the Schelling model typically
focused on evaluating a given integration index after some number of steps
of the underlying dynamical process, and did not ask what level of integration
can be achieved if the agents were non-strategic.
We obtain strong negative results:
it turns out that even the best equilibria are often much less diverse
than the maximally diverse assignments. however, when the topology is a line, 
the price of stability with respect to our index can be bounded by a small constant.
We also show that maximizing diversity is computationally hard.

\subsection{Further Related Work}
As mentioned above, Schelling's model has been studied extensively both empirically and 
theoretically. For an 
introduction to the model and a survey of 
its many variants, we refer the 
reader to the book of \citet{EK10}, and the papers 
by \citet{BIKK12} and \citet{IKLZ17}.
Besides the closely related papers by \citet{CLM18}, \citet{Elkind2019jump} and 
\citet{Echzell2019dynamics}, another work that is similar in spirit 
is a recent paper by \citet{Massand2019graphical}, who study swap stability 
in games where a set of items is to be allocated among agents 
who are connected via a social network, so that each agent gets one item, 
and her utility depends on the items she and her neighbors 
in the network get; however, their results are not directly applicable to our setting.
%
Also, Schelling games share a number of properties with 
hedonic games~\cite{DG80,BJ02}, and in particular, with fractional hedonic games 
\cite{ABBHOP19} and hedonic diversity games \cite{BEI19}. However, 
a fundamental difference between hedonic games and Schelling games
is that in the former agents form pairwise disjoint coalitions, while
in the latter the neighborhoods of different nodes of the topology may overlap.

\section{Preliminaries}\label{sec:prelim}
A {\em $k$-swap game} is given by a set $N$ of $n \geq 2$ {\em agents}
partitioned into $k \geq 2$ pairwise disjoint {\em types} $T_1, \ldots, T_k$,
and an undirected simple connected graph $G=(V,E)$ with $|V|=n$, called the {\em topology}.
We often identify types with colors: e.g., in a $2$-swap game 
each agent is either red ($T_1$) or blue ($T_2$).
The agents are also classified as either {\em strategic} or {\em stubborn}.
We denote by $R$ the set of strategic agents and by $S$ the set of stubborn agents, 
so that $R \cup S = N$.
Stubborn agents never move away from the nodes they occupy, while
a strategic agent aims to maximize her personal utility,
and is willing to {\em swap} positions with other agents to achieve this.

Given an agent $i\in T_\ell$, we refer to all other agents in $T_\ell$ as {\em friends
of $i$} and denote the set of $i$'s friends by $F_i = T_\ell \setminus \{i\}$.
Each agent $i$ occupies some node $v_i \in V$ of the topology $G$
so that $v_i \neq v_j$ for every pair of agents $i \neq j$.
The vector $\vv = (v_1,\dots, v_n)$ that lists the locations of all agents 
is called an {\em assignment}. Given an assignment $\vv$, 
we denote by $\pi_v(\vv)$ the agent that occupies node $v \in V$, that is, $\pi_{v_i}(\vv)=i$.

Given an assignment $\vv$,
let $N_i(\vv) = \{ j \neq i: \{v_i,v_j\} \in E \}$ be the set of neighbors of agent $i$.
The utility $u_i$ of a stubborn agent $i\in S$ is independent of the assignment; e.g., we can 
set $u_i(\vv)=0$ for each $i\in S$.
The utility of a strategic agent $i\in R$ for assignment $\vv$ is
\begin{align*}
u_i(\vv) = \frac{|N_i(\vv) \cap F_i|}{|N_i(\vv)|}.
\end{align*}
Observe that, since $|V|=n$, every node is occupied by some agent,
and therefore $N_i(\vv) \neq \varnothing$ for every $i\in N$.

For every assignment $\vv$, let $\vv^{i \leftrightarrow j}$ 
be the assignment that is obtained from $\vv$ by swapping
the positions of agents $i$ and $j$:
$v_\ell^{i \leftrightarrow j} =v_\ell$ for every $\ell\in N\setminus\{i,j\}$,
$v_i^{i \leftrightarrow j} = v_j$ and $v_j^{i \leftrightarrow j} = v_i$.
Agents $i$ and $j$ swap positions if and only if 
they both strictly increase their utility:
$u_i(\vv^{i \leftrightarrow j}) > u_i(\vv)$
and
$u_j(\vv^{i \leftrightarrow j}) > u_j(\vv)$.
Clearly, agents of the same type cannot both increase 
their utilities by swapping, 
so swaps always involve agents of different types. 
An assignment $\vv$ is an {\em equilibrium} if no pair of agents $i,j$ want to swap positions.
That is, $\vv$ is an equilibrium if and only if for every $i, j\in R$
we have $u_i(\vv) \geq u_i(\vv^{i \leftrightarrow j})$
or
$u_j(\vv) \geq u_j(\vv^{i \leftrightarrow j})$.
We denote the set of all equilibrium assignments 
of the $k$-swap game $\calG$ by $\eq(\calG)$.

For every assignment $\vv$, we define two benchmarks that aim to capture,
respectively, the agents' happiness and the societal diversity.
The first one is the well-known {\em social welfare},
defined as the total utility of all strategic agents:
$$ 
\SW(\vv) = \sum_{i \in R} u_i(\vv). 
$$
Our second benchmark is the {\em degree of integration}:
we say that an agent is {\em exposed} is she has at least one neighbor
of a different type, and count the number of exposed agents:
$$ 
\DI(\vv) = |\{i\in R: N_i(\vv) \setminus F_i \neq \varnothing\}|.
$$
Note that we ignore the stubborn agents in the definitions of
our benchmarks, as their utility is always the same 
and they never want to move somewhere else.

For $f \in \{\SW,\DI\}$, let $\vv^*_f(\calG)$ be the {\em optimal} assignment 
in terms of the benchmark $f$ for a given game $\calG$. 
The {\em price of anarchy} (PoA) in terms of the benchmark $f$ 
is the worst-case ratio (over all $k$-swap games $\calG$ such that 
$\eq(\calG)\neq \varnothing$) between the optimal performance 
(among all assignments) and the performance of the 
{\em worst} equilibrium assignment. 
Similarly, the {\em price of stability} (PoS) in terms of $f$ 
is the worst-case ratio between the optimal performance 
and the performance of the {\em best} equilibrium:
\begin{align*}
\PoA_f &= \sup_{\calG: \eq(\calG) \neq \varnothing} 
\sup_{\vv \in \eq(\calG)} \frac{f(\vv^*_f(\calG))}{f(\vv)}, \\
\PoS_f &= \sup_{\calG: \eq(\calG) \neq \varnothing} 
\inf_{\vv \in \eq(\calG)} \frac{f(\vv^*_f(\calG))}{f(\vv)}.
\end{align*}
For readability, we refer to the quantity $\PoA_\SW$ as the {\em social price of 
anarchy} and to $\PoA_\DI$ as the {\em integration price of anarchy}, and use 
similar language for the price of stability.


\section{Existence of Equilibria}
We begin by discussing the existence of equilibria in swap games. The work of 
\citet{Echzell2019dynamics} implies that at least one equilibrium assignment 
exists when the topology is a regular graph. Furthermore, using a potential function 
similar to the one used by \citet{Elkind2019jump} for jump games, we can show that 
equilibria always exist when the topology is a graph of maximum degree $2$; we omit 
the details. Our first result is a proof of non-existence 
of equilibria for every $k \geq 2$ for general topologies.

\begin{figure}[t]
\centering
\includegraphics[scale=0.4]{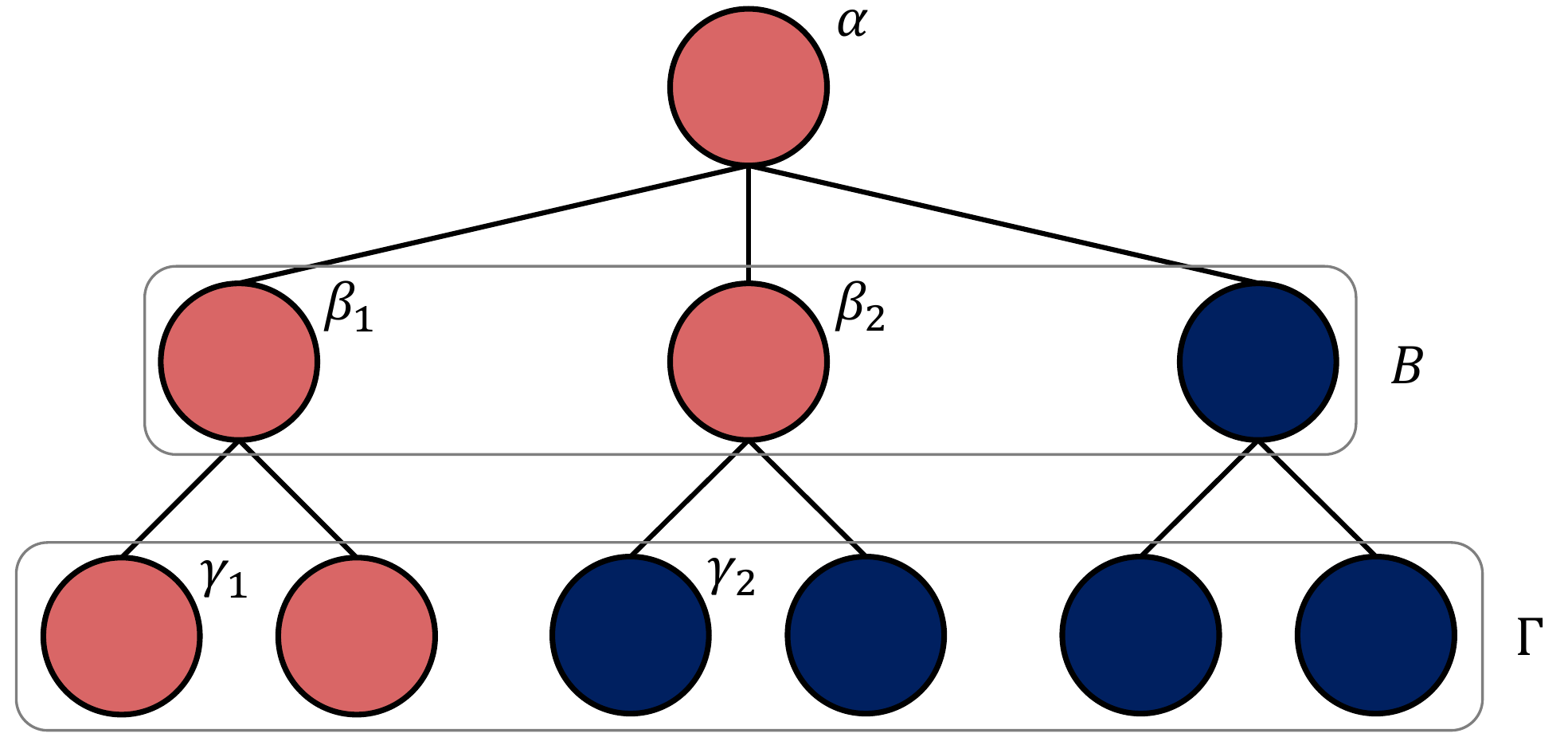}
\caption{The topology of the $2$-swap game considered 
in the proof of Theorem~\ref{thm:non-existence}, 
and an assignment that corresponds to the last case in the analysis; 
it is not an equilibrium since the red agent at node $\alpha$ and
the blue agent at node $\gamma_2$ would like to swap.}
\label{fig:non-existence}
\end{figure}

\begin{theorem}\label{thm:non-existence}
For every $k \geq 2$, there exists a $k$-swap game that does not admit an equilibrium 
assignment, even when all agents are strategic and the topology is a tree.
\end{theorem}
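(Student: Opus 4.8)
The plan is to construct an explicit tree on a small number of nodes, first for $k=2$, and then reduce the general case $k \geq 2$ to it. For the $2$-type case, I would take the topology hinted at in Figure~\ref{fig:non-existence}: a ``central'' node $\alpha$ connected to a couple of pendant-ish gadgets (the nodes $\gamma_1,\gamma_2$ referenced in the caption, plus some leaves), chosen so that the degree profile forces a tension. The key design principle is that a strategic agent's utility $|N_i(\vv)\cap F_i|/|N_i(\vv)|$ depends both on \emph{how many} same-type neighbors she has and on \emph{the degree of her node}, so by attaching leaves of different multiplicities to a few internal nodes one can make the ``attractiveness'' of a node depend delicately on who sits there. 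I would fix the number of red and blue agents (the leaves will be partitioned so that most leaves are forced, by a simple counting/degree argument, to be occupied by agents whose utility is already $1$ or already $0$ and who therefore never swap), leaving only a handful of ``free'' agents on the internal nodes whose placement is genuinely in play.

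The heart of the argument is then a finite case analysis: up to the obvious symmetries there are only a few essentially distinct assignments of the free agents to the internal nodes, and for each one I would exhibit a profitable swap. Concretely, one shows that in every candidate equilibrium some red agent on a high-degree node is ``diluted'' by blue neighbors and would strictly prefer a lower-degree node currently held by a blue agent, while that blue agent simultaneously strictly prefers the high-degree node — exactly the swap between $\alpha$ and $\gamma_2$ described in the figure caption — and symmetric swaps handle the remaining cases. Each such claim is just a comparison of two fractions with small integer numerators and denominators, so verifying it is routine once the gadget sizes are pinned down.

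To extend from $k=2$ to arbitrary $k$, I would keep the same tree and the same red/blue agents, and simply introduce one agent of each of the remaining $k-2$ types, hanging each on its own private pendant leaf attached to a node that already has a guaranteed same-... wait — there are no friends for a singleton type, so instead I attach each extra-type agent as a leaf whose unique neighbor is an agent of yet another type, so every extra agent has utility $0$ and, being the sole member of her type, can never improve by swapping (any swap leaves her with a neighborhood containing no friends). Thus these agents are inert, the remaining game among the red and blue agents is isomorphic to the $2$-type instance, and non-existence is preserved. Alternatively, and more cleanly, one can duplicate the entire $2$-type gadget $k/2$-ish times with fresh colors — but the pendant construction needs only a constant number of extra nodes, so I would use that.

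The main obstacle is the gadget design in the $2$-type case: one must choose the degrees of the internal nodes so that (i) enough leaf-agents are ``frozen'' that the case analysis stays small, and (ii) in \emph{every} surviving assignment there is a genuine two-sided profitable swap, with no assignment slipping through because one side of the swap is only weakly better off. Getting both conditions simultaneously, while keeping the tree small enough to analyze by hand, is the delicate part; once the right node degrees are found, the rest is a short, mechanical verification.
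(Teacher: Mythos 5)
Your proposal correctly identifies the overall shape of the argument --- an explicit small tree for $k=2$, an exhaustive case analysis showing every assignment admits a profitable swap, and a reduction from general $k$ to the two-type case --- but it stops short of the actual content. The entire substance of the theorem is the concrete gadget together with the verification that \emph{no} assignment survives, and you explicitly defer both (``once the right node degrees are found, the rest is \ldots{} mechanical''). For the record, the paper's $k=2$ gadget is a root $\alpha$ with three children $\beta_1,\beta_2,\beta_3$, each having two leaf children, with $5$ red and $5$ blue agents; the case analysis is a short pigeonhole argument (two of the $\beta_i$ carry same-color agents, which forces one of their subtrees to be monochromatic, which by counting forces a blue leaf under the other, and then whichever color sits at $\alpha$ yields a profitable swap with that leaf). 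Without some such concrete instance and the accompanying exhaustive check, the proposal is a plan, not a proof. A smaller but telling point: your claim that leaf agents with utility $0$ are ``frozen'' and ``therefore never swap'' is backwards --- a utility-$0$ agent strictly benefits from moving to any node adjacent to a friend, so she is the \emph{most} willing party to a swap; what blocks a swap is the counterparty, and indeed the paper's profitable swaps are precisely between a utility-$0$ leaf agent and an internal agent with utility in $(0,1)$.

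The extension to $k\ge 3$ also has a gap. Hanging one pendant leaf per extra type onto the two-type gadget does make those agents inert (a singleton type has utility $0$ forever and can never strictly improve), but it changes the degree of every node you attach a leaf to, and utilities in this model are normalized by degree. The red/blue subgame on the modified tree is therefore \emph{not} isomorphic to the original instance --- for example, an agent at a former degree-$1$ leaf that now carries a pendant has utility $0$, $1/2$, or $1$ rather than $0$ or $1$ --- so the non-existence analysis must be redone for the new degree profile, and you give no argument that it still goes through. (The paper avoids this entirely by building, for each $k\ge 3$, a fresh tree with $k^2-2$ agents of every type, which also yields a less degenerate instance than one in which $k-2$ of the types are singletons that behave exactly like stubborn agents.) To salvage your route you would either have to re-verify the case analysis on the augmented tree or add the extra agents without perturbing any relevant degree; neither is addressed.
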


\begin{proof}
We start with the easiest case of $k=2$. 
Consider a $2$-swap game with $10$ strategic agents: $5$ red agents and $5$ blue agents. 
The topology is a tree with a root node $\alpha$, which has three children nodes (set $B$), 
each of which has two children of its own (set $\Gamma$); see Figure~\ref{fig:non-existence}.
Suppose for the sake of contradiction that this game admits an equilibrium assignment $\vv$. 

Since $|B|=3$ and there are only two types of agents, 
at least two nodes in $B$, say $\beta_1$ and $\beta_2$, 
must be occupied by agents of the same type, say red.
Now assume that nodes $\gamma_1$ (a child of $\beta_1$) and 
$\gamma_2$ (a child of $\beta_2$) are occupied by blue agents. 
Then the red agent $\pi_{\beta_1}(\vv)$ and the blue agent 
$\pi_{\gamma_2}(\vv)$ can swap positions 
to increase their utility from strictly smaller than 
$1$ and $0$ to $1$ and positive, respectively. 
Therefore, for at least one of these nodes (say, $\beta_1$)
it must be the case that both of its children 
are occupied by red agents; as there are only five red agents, 
it follows that at least one of the children of $\beta_2$, 
say $\gamma_2$, is occupied by a blue agent. 

If node $\alpha$ is occupied by a blue agent, then the red agent 
$\pi_{\beta_1}(\vv)$ and the blue agent $\pi_{\gamma_2}(\vv)$ 
can both increase their utility by swapping. 
Hence, node $\alpha$ must be occupied by a red agent 
(see Figure~\ref{fig:non-existence}). 
However, this assignment is not an equilibrium either, 
since the red agent $\pi_\alpha(\vv)$ and the blue agent $\pi_{\gamma_2}(\vv)$ 
have an incentive to swap.

For $k \geq 3$, consider a $k$-swap game with $n=k(k^2-2)$ agents such that there are $k^2-2$ agents of type $T_\ell$, for every $\ell \in [k]$. The topology is a tree whose nodes are distributed over three layers, just like in the case $k=2$. Specifically, there is a root node $\alpha$, which has a set $B$ of $k(k-1)-1$ children. Each node in $\beta \in B$ has a set $\Gamma_\beta$ of $k$ children leaf nodes; let $\Gamma = \bigcup_{\beta \in B} \Gamma_\beta$. Next, we will argue about the structure of assignments that cannot be equilibria. 

\begin{lemma}\label{lem:non-existence-1}
An assignment $\vv$ according to which any two nodes $\beta_1, \beta_2 \in B$ are occupied by agents of the same type $T_x$ cannot be an equilibrium in case agents of some type $T_y$, $y \neq x$ simultaneously occupy nodes of $\Gamma_{\beta_1}$ and $\Gamma_{\beta_2}$.
\end{lemma}

\begin{proof}
Let $\vv$ be an assignment according to which nodes $\beta_1, \beta_2 \in B$ are occupied by agents of type $T_x$, and there exist nodes $\gamma_1 \in \Gamma_{\beta_1}$ and $\gamma_2 \in \Gamma_{\beta_2}$, which are occupied by agents of some type $T_y$, with $y \neq x$. Clearly, agent $\pi_{\beta_1}(\vv)$ has utility strictly less than $1$, while agent $\pi_{\gamma_2}(\vv)$ has utility $0$. Therefore, they would like to swap positions in order to increase their utility to $1$ and positive, respectively.
\end{proof}

\begin{lemma}\label{lem:non-existence-2}
An assignment $\vv$ according to which agents of every type occupy nodes of $B$ cannot be an equilibrium. 
\end{lemma}

\begin{proof}
Let $\vv$ be an assignment according to which at least one agent of every type occupies some node of $B$. Without loss of generality, assume that the agent $\pi_\alpha(\vv)$ is of type $T_x$. We now distinguish between the following two cases.
\begin{itemize}
\item An agent of type $T_x$ located at node $\beta \in B$ has a neighbor of type $T_y$, $y \neq x$ located at some node $\gamma \in \Gamma_\beta$. Then, by the assumption of the lemma, there exists at least one agent of type $T_y$ located at some node $\beta' \in B \setminus \{\beta\}$, and therefore agents $\pi_\alpha(\vv)$ and $\pi_\gamma(\vv)$ would like to swap positions in order to increase their utility from strictly less than $1$ and $0$ to $1$ and positive, respectively. 

\item For every agent of type $T_x$ located at some node $\beta \in B$, all agents occupying the nodes of $\Gamma_\beta$ are of type $T_x$. Since $\alpha$ is occupied by an agent of type $T_x$, there are $k^2-3 = (k-1)(k+1)-2$ other agents of type $T_x$ that can completely fill up at most $k-2$ subtrees of $\alpha$ (since each of them consists of $k+1$ nodes). Consequently, there are at least $k-1$ agents of type $T_x$ located at leaf nodes in other subtrees of $\alpha$.

Now, assume that one of these agents of type $T_x$ occupies a node $\gamma \in \Gamma_\beta$ such that $\beta \in B$ is occupied by an agent of type $T_y$, with $y \neq x$. We will now argue that there must exist another agent of type $T_y$ located at some node $\beta' \in B \setminus \{\beta\}$. Assume otherwise that there is no such agent. Then, the remaining $k(k-1)-2$ agents of type $T_y$ occupy leaf nodes of the tree. By Lemma~\ref{lem:non-existence-1}, such agents located in different subtrees of $\alpha$ cannot be connected to agents of the same type. Hence, to cover all  
these $k(k-1)-2$ agents of type $T_y$, agents of $k-1$ different types need to occupy the root nodes of the corresponding subtrees of $\alpha$. However, there are only $k-2$ types left (different than $T_x$ and $T_y$). Consequently, there must exist another agent of type $T_y$ that occupies some  
node $\beta' \in B \setminus \{\beta\}$. As a result, agents $\pi_\gamma(\vv)$ and $\pi_{\beta'}(\vv)$ have incentive to swap positions in order to increase their utility from $0$ and strictly less than $1$ to positive and $1$, respectively. 
\end{itemize}
This completes the proof of the lemma.
\end{proof}

\begin{lemma}\label{lem:non-existence-3}
An assignment according to which there exists a type $T_\ell$ such that no agent of this type appears at nodes of $B$ cannot be an equilibrium.
\end{lemma}

\begin{proof}
Let $\vv$ be an assignment according to which no agent of type $T_\ell$ appears at the nodes of $B$. We first deal with the case $k \geq 4$. Observe that there are at least $k^2-3$ agents of type $T_\ell$ that must occupy nodes of $\Gamma$; one agent of type $T_\ell$ may occupy $\alpha$. By Lemma~\ref{lem:non-existence-1}, agents of the same type that are located in different subtrees of $\alpha$ cannot be connected to agents of the same type. Hence, to cover all these $k^2-3$ agents of type $T_\ell$, agents of $k$ different types must occupy the root nodes of the corresponding subtrees of $\alpha$, which is impossible. 

For $k=3$, if $\alpha$ is not occupied by an agent of type $T_\ell$, then all $k^2-2 = 7$ agents of this type must occupy nodes of $\Gamma$, and the same argument as above leads to a contradiction. Hence, assume that $\pi_\alpha(\vv)$ is of type $T_\ell$. Since $|B|=5$ and no agent of type $T_\ell$ appears at the nodes of $B$, at least three nodes of $B$ must be occupied by agents of the same type. 
Let $B = \{\beta_1, ..., \beta_5\}$, and assume that nodes $\beta_1, \beta_2$ and $\beta_3$ are occupied by agents of type $T_0$. If an agent of type $T_\ell$ occupies some node $\gamma \in \Gamma_\beta$ for any $\beta \in \{\beta_1, \beta_2, \beta_3\}$, then agent $\pi_{\beta'}(\vv)$, $\beta' \in \{\beta_1, \beta_2, \beta_3\} \setminus \{ \beta \}$ has incentive to swap positions with agent $\pi_\gamma(\vv)$ to increase both of their utilities from strictly less than $1$ and $0$ to $1$ and positive. Hence, no agent of type $T_\ell$ can be located at the nodes of $\Gamma_{\beta_1} \cup \Gamma_{\beta_2} \cup \Gamma_{\beta_3}$. Clearly, if agent $\pi_{\beta_4}(\vv)$ or $\pi_{\beta_5}(\vv)$ is of type $T$, for the same reason, no agent of type $T_\ell$ can be located at the corresponding leaf nodes. Hence, both $\beta_4$ and $\beta_5$ must be occupied by agents of the third type $T_1$ and all leaf nodes $\Gamma_{\beta_4} \cup \Gamma_{\beta_5}$ must be occupied by the remaining $6$ agents of type $T_\ell$. However, this clearly cannot be an equilibrium assignment, since agent $\pi_{\beta_4}(\vv)$ would like to swap with any agent in $\Gamma_{\beta_5}$. 
\end{proof}

By Lemmas~\ref{lem:non-existence-2} and~\ref{lem:non-existence-3}, we conclude that no assignment can be an equilibrium.  
\end{proof}

The topology used in the proof of Theorem~\ref{thm:non-existence} for the case $k=2$ is utilized as 
a subgraph in the proof of the following theorem, to show that the problem of deciding 
whether an equilibrium exists is computationally hard.

\begin{theorem}
For every $k \geq 2$, it is NP-complete to decide whether a given $k$-swap game 
admits an equilibrium.
\end{theorem}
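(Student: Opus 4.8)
Membership in NP is the easy part: an assignment $\vv$ itself serves as a certificate, since for each of the $O(n^2)$ pairs of agents $i,j$ one can compute $u_i(\vv), u_i(\vv^{i \leftrightarrow j}), u_j(\vv), u_j(\vv^{i \leftrightarrow j})$ in polynomial time and thus check whether $i$ and $j$ would swap; $\vv$ is an equilibrium iff no pair does. So the real work is NP-hardness, and the plan is to reduce from a convenient NP-complete problem — I would take $3$-SAT (a restricted variant with each variable occurring a bounded number of times would make the gadget sizes cleaner, but is not essential). From a formula $\phi$ with variables $x_1,\dots,x_m$ and clauses $C_1,\dots,C_r$ I would build a $k$-swap game $\calG_\phi$, first for $k=2$, whose topology is a tree-of-gadgets (or a graph, if that is more convenient), and argue that $\calG_\phi \in \{\text{games with an equilibrium}\}$ iff $\phi$ is satisfiable.

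The construction I have in mind has three kinds of pieces, glued along shared nodes/edges. First, a \emph{variable gadget} for each $x_i$, designed so that in any equilibrium it is in one of exactly two stable configurations, corresponding to $x_i = \textup{true}$ and $x_i = \textup{false}$; concretely the gadget has a distinguished ``output'' node whose colour encodes the value, and its internal agents are arranged so that every other local configuration admits a profitable swap. Second, a \emph{clause gadget} for each $C_j$ that is wired to (copies of) the output nodes of the three variables appearing in $C_j$, and has a distinguished ``defusing'' node $w_j$ whose colour is forced, in equilibrium, to depend on whether $C_j$ is satisfied by the surrounding variable configuration. Third, and crucially, attached to each clause gadget through the node $w_j$ is a copy $Z_j$ of the ten-node tree used in the proof of Theorem~\ref{thm:non-existence} (rooted at $\alpha$, with $w_j$ adjacent to $\alpha$). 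The point of $Z_j$ is that, by Theorem~\ref{thm:non-existence}, it has no equilibrium ``on its own''; but the extra neighbour $w_j$ of $\alpha$ breaks the $5$--$5$ colour parity that drove the non-existence argument exactly when $w_j$ has the colour that corresponds to ``$C_j$ satisfied'', so that $Z_j$ then has a stable configuration, whereas in the ``unsatisfied'' case the obstruction of Theorem~\ref{thm:non-existence} is reproduced and some pair of agents inside $Z_j$ always wants to swap.

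Granting such gadgets, correctness is the usual two directions. If $\phi$ has a satisfying assignment, I set each variable gadget to the matching stable state, propagate the literal colours, put each clause gadget into its ``satisfied'' state (legitimate since every clause has a true literal), and use the defusing node to place each $Z_j$ into a stable configuration; a routine (if tedious) check of all gadget interiors and all gadget boundaries shows the resulting global assignment is an equilibrium. Conversely, given an equilibrium of $\calG_\phi$: each variable gadget is necessarily in one of its two stable states, which reads off a truth assignment; and each clause gadget must be in its ``satisfied'' state, since otherwise $w_j$ carries the ``bad'' colour and $Z_j$ is precisely the non-equilibrium instance of Theorem~\ref{thm:non-existence} — contradicting that we are at an equilibrium. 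Hence every clause is satisfied and $\phi$ is satisfiable. For $k\ge 3$ I would either pad with $k-2$ additional types, each occupying its own small component with a trivial unique-colour structure, or replace each $Z_j$ by the $k$-type non-existence tree already built in the proof of Theorem~\ref{thm:non-existence} and adapt the clause gadget and defusing node accordingly.

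The hard part, I expect, is not the global bookkeeping but the gadget design itself: I must make the variable and clause gadgets so rigid that their \emph{only} equilibrium-compatible local configurations are the intended ones, ruling out ``spurious'' stable states that would correspond to no consistent truth assignment or that would bypass the clause check; and I must verify that gluing gadgets together does not create unintended profitable swaps across gadget boundaries (an agent near an interface wanting to trade with an agent deep inside a neighbouring gadget). A more delicate point is getting the defusing mechanism to behave symmetrically — the ``satisfied'' colour at $w_j$ must genuinely let $Z_j$ stabilize, while the ``unsatisfied'' colour must genuinely keep the Theorem~\ref{thm:non-existence} obstruction alive even though $Z_j$ is now attached to the rest of the graph rather than isolated. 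Once these invariants are nailed down, the reduction is polynomial-time and the equivalence follows as sketched above.
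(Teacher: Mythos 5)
Your high-level trick --- embedding the non-existence tree of Theorem~\ref{thm:non-existence} as a conditional obstruction that is ``defused'' exactly when the source instance is a yes-instance --- is the same one the paper uses (the paper reduces from {\sc Clique} rather than 3-SAT, and attaches a single padded copy of the tree as a component $G_3$ rather than one copy per clause). The NP membership argument is fine. But the hardness direction of your write-up is a plan, not a proof, and the parts you defer are exactly the parts where the difficulty lives. You never construct the variable or clause gadgets, and you acknowledge as much; in this problem the gadgets cannot be waved at, because swap games have no locality: any two strategic agents anywhere in the graph may swap, so an agent ``deep inside'' one gadget can always trade with an agent in another. The paper spends essentially all of its effort on this point, padding every component with stubborn agents so that the achievable utilities in the three components sit in carefully separated numerical ranges ($\chi_r$, $\chi_b$, $\psi_0$, $\psi_1$ versus the utilities attainable in $G_1$), which is what forces exactly $5$ red and $5$ blue strategic agents into the non-existence gadget in any putative equilibrium. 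Your proposal has no mechanism of this kind --- indeed it does not use stubborn agents at all for $k=2$ --- so nothing prevents agents from migrating out of a ``bad'' $Z_j$ into a variable gadget, or prevents the wrong number of agents of each colour from ending up inside a $Z_j$ in the first place.

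The specific ``defusing'' mechanism you describe is also doubtful as stated. You say the colour of $w_j$ ``breaks the $5$--$5$ colour parity that drove the non-existence argument.'' But the non-existence proof for the ten-node tree is not a parity argument about an adjacent external node; it is a case analysis on the utilities of the agents at $\alpha$, $B$, and $\Gamma$, all of which change once $\alpha$ acquires an eleventh neighbour $w_j$, and which change again once the agents in $Z_j$ are free to swap with agents outside $Z_j$. You would need to redo that entire case analysis twice --- once to show a stable configuration exists when $w_j$ has the ``satisfied'' colour, and once to show the obstruction survives (including against external swaps) when it does not --- and there is no a priori reason both verifications succeed for the same gadget. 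Until the gadgets are written down and those two verifications are carried out, the reduction is not established.
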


\begin{proof}
Membership in NP is immediate: we can verify 
whether a given assignment is an equilibrium 
by simply checking if there exists a pair of agents that would like to swap positions. 
To prove NP-hardness, we give a reduction from the {\sc Clique} problem, 
which in known to be NP-complete. 
An instance of {\sc Clique} consists of an undirected connected graph $H=(X,Y)$ 
and an integer $\lambda$; it is a yes-instance if $H$ contains a complete subgraph 
of size $\lambda$. Without loss of generality, we assume that $\lambda > 5$.

Given an instance $\langle H, \lambda \rangle$ of {\sc Clique}
with $H=(X, Y)$, we will construct a $2$-swap game as follows 
(the reduction can be extended to any $k> 2$ by adding stubborn agents of different types).
Let $d_v$ denote the degree of node $v$ in $H$, 
and set $d_H = \max_{v \in X} d_v$. 
\begin{itemize}
\item There are $\lambda$ strategic red agents and $t=|X| + 5$ strategic blue agents; 
all other agents are stubborn, and will be defined in conjunction with the topology.
\item The topology $G=(V,E)$ consists of three components $G_1$, $G_2$ and $G_3$. 
These are connected to each other via stubborn agents, and are defined as follows:
\begin{itemize}
\item
To define $G_1 = (V_1, E_1)$, let $W_v$ be a set of $2d_H - d_v + 2\lambda - 3$ nodes 
for each $v \in V$. Then, 
$V_1 = X \bigcup_{v\in V} W_v$
and 
$E_1 = Y \cup \{ \{v, w\}: v \in X, w \in W_v \}$.
For every $v \in X$, $d_H$ nodes of $W_v$ are occupied by stubborn red agents, 
while the remaining $d_H - d_v + 2\lambda - 3$ nodes are occupied by stubborn blue agents. 
Observe that every node of $G_1$ has degree $d_1 = 2d_H + 2\lambda - 3$.

\item
The subgraph $G_2 = (A \cup B, E_2)$ is a complete bipartite graph with 
$|A|=\lambda-5$ and $|B|=4d_1$. 
Out of the $4d_1$ nodes of $B$, 
$2d_1 + 1$ nodes are occupied by stubborn red agents, while the remaining
$2d_1 - 1$ nodes are occupied by stubborn blue agents.

Hence, a strategic red agent occupying a node of $A$ has utility 
$\chi_r = \frac{2 d_1 + 1}{4 d_1} = \frac{1}{2} + \frac{1}{4d_1}$. 
Similarly, a strategic blue agent has utility 
$\chi_b = \frac{2 d_1 - 1}{4 d_1} = \frac{1}{2} - \frac{1}{4d_1}$.

\item
To define $G_3=(V_3, E_3)$, let $(V'_3, E'_3)$ be the graph used in the proof of 
Theorem~\ref{thm:non-existence}, for which there is no equilibrium assignment; 
see Figure~\ref{fig:non-existence}. For every node $v \in V_3'$ of degree $3$, 
let $Z_v$ be a set of $100d_1$ nodes such that $50 d_1$ of these nodes 
are occupied by stubborn red agents, while the remaining $50 d_1$ nodes 
are occupied by stubborn blue agents. For every $v \in V_3'$ of degree $1$, 
let $Z_v$ be a set of $10d_1$ nodes such that $5d_1$ of these nodes 
are occupied by stubborn red agents, while the remaining $5d_1$ nodes 
are occupied by stubborn blue agents. Then, $V_3 = V'_3 \bigcup_{v \in V_3'} Z_v $ 
and $E_3 = E'_3 \cup \{ \{v, w\}: v \in V'_3, w \in Z_v\}$. 

One can easily verify that the utility of a strategic agent (red or blue) occupying a node 
of $G_3$ is at least $\psi_0 = \frac{5 d_1 -1}{10 d_1+1} > \frac{1}{2} - 
\frac{1}{4d_1}$ and at most $\psi_1 = \frac{5 d_1 + 1}{10 d_1+1} < \frac{1}{2} + \frac{1}{4d_1}$.
\end{itemize}
\end{itemize}

Now, assume that $H$ has a clique of size $\lambda$, and let $\vv$ be the assignment 
in which the strategic red 
agents occupy the nodes of the clique, and the strategic 
blue agents occupy the remaining nodes.
Each strategic red agent is connected to $\lambda-1 + d_H$ other red agents (strategic and 
stubborn) in $G_1$, and thus has utility
$$
u = \frac{ \lambda-1 + d_H} { d_1} = 
\frac{ d_H + \lambda - 1.5 + 0.5} { 2d_H + 2\lambda - 3 } \ge 
\frac{1}{2} + \frac{1}{2 d_1}.
$$
Clearly, since $u > \chi_r$ and $u > \psi_1$, no strategic red agent would be willing to swap positions with another strategic agent in $G_2$ or $G_3$. By swapping positions with a blue agent within $G_1$, a strategic red agent would still have at most $\lambda-1+d_H$ friends, and since every node in $G_1$ has the same degree, her utility cannot be improved. Hence, no strategic red agent has a profitable deviation, and $\vv$ is an equilibrium.

Conversely, assume that $H$ does not contain a clique of size $\lambda$, and for the sake of contradiction also assume that there is an equilibrium assignment $\vv$.

Suppose that some strategic red agents are located in $G_1$. 
It cannot be the case that each of them is adjacent to $\lambda-1$ other strategic red agents, 
as this would mean that the nodes they occupy form a clique of size $\lambda$. 
Hence, at least one of these agents, say agent $i$, is adjacent to at most $\lambda-2$ strategic red agents. 
Since every node of $G_1$ has degree $d_1$ and every node is adjacent to $d_H$ stubborn red agents, 
the utility of $i$ is 
$$
u_i \leq \frac{d_H + \lambda-2}{d_1} = 
\frac{ d_H + \lambda - 1.5 - 0.5} { 2d_H + 2\lambda - 3 } = 
\frac{1}{2} - \frac{1}{2 d_1}.
$$ 
We have that $u_i < \chi_r$ and $u_i < \psi_0$, and hence agent $i$ has incentive to move to $G_2$ or $G_3$.
On the other hand, the utility that a strategic blue agent $j$ that is currently located in 
$G_2$ or $G_3$ can obtain by swapping positions with $i$ is
$$
u_j = 1 - u_i \geq \frac{1}{2} + \frac{1}{2 d_1}.
$$
Since $u_j > \chi_b$ and $u_j > \psi_1$, agent $j$ also has an incentive to swap positions 
with agent $i$, and hence $\vv$ cannot be an equilibrium assignment. 
Therefore, no strategic red agent is located in $G_1$.

Similarly, observe that $\chi_r > \psi_1$ and $\chi_b < \psi_0$, meaning that 
strategic red agents would prefer to be in $G_2$, while strategic blue agents 
would prefer to be in $G_3$. Thus, for $\vv$ to be an equilibrium assignment, 
it must be the case that all if a node of $G_2$ is not occupied by a stubborn agent, 
it is occupied by a strategic red agent. As a result, there are $5$ strategic red 
and $5$ strategic blue agents in $G_3$. 
However, similarly to the proof of Theorem~\ref{thm:non-existence}, 
we can argue that there is no equilibrium assignment for these agents in $G_3$; we omit the details here.
Since we have exhausted all possibilities, it follows that if $H$ does not have a clique
of size $\lambda$, then there is no equilibrium assignment.
\end{proof}


\section{Social Welfare}
Here, we consider the efficiency of equilibrium assignments in terms of social welfare, and bound the social price of anarchy and stability for many interesting cases. We restrict our attention to games consisting of only strategic agents and such that there are at least two agents per type. Swap games with stubborn agents or strategic agents that are unique of their type can easily be seen to have unbounded social price of anarchy.\footnote{For any $k\geq 2$, consider a $k$-swap game with a star topology and $k$ types of agents such that there is only one red strategic agent, while the other types consist of at least two strategic agents and of some stubborn ones located at peripheral nodes. The assignment according to which the red agent occupies the center node is an equilibrium with $0$ social welfare, while any assignment such that the center node is occupied by a non-red agent has positive social welfare.}

We start with the social price of anarchy of $2$-swap games, and consider the general case (given the above restrictions) and the case where each type consists of the same number of agents. 

\begin{figure}[t]
\centering
\includegraphics[scale=0.35]{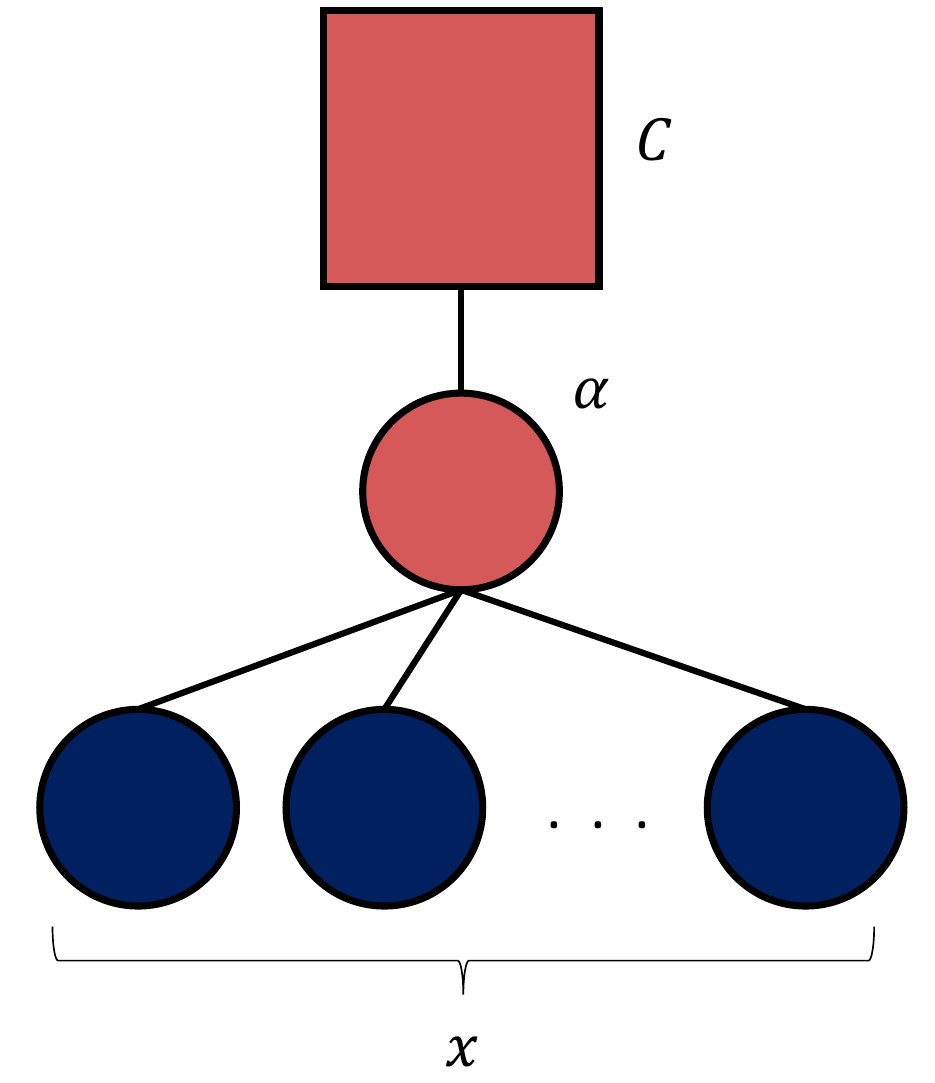}
\caption{The topology and the equilibrium assignment of the lower bound instance 
in the proof of part (i) of Theorem~\ref{thm:poa-2-types}. The big red square 
represents a clique whose nodes are occupied by red agents only.}
\label{fig:poa-2-types}
\end{figure}

\begin{theorem}\label{thm:poa-2-types}
The social price of anarchy of $2$-swap games with only strategic agents is
\begin{itemize}
\item[(i)] $\Theta(n)$ if there are at least two agents of each type, and
\item[(ii)] between $921/448 \approx 2.0558$ and $4$ if each type consists of the same number of agents.
\end{itemize}
\end{theorem}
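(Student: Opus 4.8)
The plan is to treat the four bounds (upper/lower for each of the two regimes) separately, reusing a common pair of observations: (a) in any $2$-swap game with only strategic agents, every agent's utility lies in $[0,1]$, and the worst-case equilibrium utility of a red agent is driven by how many red neighbours she can guarantee; (b) since each type has at least two agents, an agent cannot be a lone "colour representative", so certain trivial zero-welfare equilibria are excluded, which is exactly why the social price of anarchy becomes finite here.

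For part (i), the upper bound $O(n)$ is essentially free: in any equilibrium $\vv$ with $\eq(\calG)\neq\varnothing$ I claim $\SW(\vv) = \Omega(1)$, while $\SW(\vv^*_\SW(\calG)) \le n$. For the lower bound $\Omega(1)$ on the worst equilibrium, I would argue that some red agent has at least one red neighbour (if the red type were an independent set, one could find an improving swap with a blue agent who is currently surrounded by blues — here I would have to be a little careful, mirroring the argument in Theorem~\ref{thm:non-existence} that a $(<1,0)$ pair always swaps), so the worst equilibrium has social welfare bounded below by a constant. The matching $\Omega(n)$ lower bound on the ratio is the construction sketched in Figure~\ref{fig:poa-2-types}: a large red clique attached via a gadget to a few blue agents, arranged so that the depicted assignment is an equilibrium with welfare $O(1)$ coming only from the clique-internal red agents near the boundary, whereas the optimal assignment mixes colours to give welfare $\Omega(n)$. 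I would verify equilibrium by checking that no red agent inside the clique can profitably leave (her utility is already near $1$) and no blue agent can lure a red agent out, using the clique's high degree to swamp any single swap.

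For part (ii), the upper bound of $4$ follows from a charging/averaging argument: when each type has exactly $n/2$ agents, I would show that in any equilibrium the average utility is at least $1/4$. The idea is that if the average red utility were below $1/4$, then many red agents have utility $< 1/2$, i.e.\ are in the minority in their own neighbourhood; symmetrically for blue; one then exhibits a red–blue pair each of whom would strictly gain by swapping (the red agent moves to a red-majority spot, the blue to a blue-majority spot), contradicting equilibrium. Since the optimum is at most $n = 2\cdot(n/2)$ and the worst equilibrium is at least $(n/2)\cdot(1/4)\cdot 2 = n/4$, wait — more carefully, $\SW \le n$ always and any equilibrium has $\SW \ge n/4$, giving ratio $\le 4$. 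The lower bound $921/448$ is a finite explicit instance; I would present a small graph with equal colour classes, exhibit one equilibrium assignment whose welfare computes to $448\cdot c$ for the appropriate normalisation and an optimal assignment of welfare $921\cdot c$, and check equilibrium by enumerating candidate swaps (finitely many, since the instance is fixed).

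The main obstacle is the upper bound argument in part (ii) — proving the "$1/4$" bound on average equilibrium utility. The delicate point is that low individual utility does not by itself produce an improving swap: one needs a red agent with utility $<1/2$ \emph{and} a blue agent with utility $<1/2$ whose positions, when exchanged, each yield utility $>1/2$; this requires controlling how swapping $i$ and $j$ changes the neighbourhoods of $i$ and $j$ themselves (and the subtlety that $i$ and $j$ might be adjacent). I expect the clean way through is a counting argument on the number of bichromatic edges: an equilibrium cannot have too many agents "on the wrong side" simultaneously, because pairing them up across colours (using that the colour classes are balanced, so a perfect red–blue matching of the low-utility agents exists) forces a profitable swap for at least one pair. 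Making this matching/exchange argument fully rigorous, including the adjacent-pair case, is where the real work lies; the other three bounds are comparatively routine once the gadgets are drawn.
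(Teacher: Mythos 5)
Your skeleton matches the paper's (constant lower bound on equilibrium welfare for part~(i); an averaging argument targeting $\SW(\vv)\ge n/4$ for part~(ii); explicit instances for the two lower bounds), but the one step you yourself flag as ``where the real work lies'' is exactly the step that is missing, and the route you propose for it is not the right one. The paper's key lemma is a \emph{pairwise} inequality that holds for \emph{every} red--blue pair $(i,j)$ once all utilities are positive: by the equilibrium condition one of the two, say $i$, does not want to swap; if $v_i$ and $v_j$ are non-adjacent, then after the swap $i$ would inherit exactly $j$'s neighbourhood, so $u_i(\vv^{i\leftrightarrow j})=1-u_j(\vv)$ and hence $u_i(\vv)+u_j(\vv)\ge 1$; if they are adjacent, a short computation (writing $u_j=\frac{x_b}{x_r+x_b+1}$ and bounding $u_i\ge\frac{x_r}{x_r+x_b+1}$) still gives $u_i(\vv)+u_j(\vv)\ge\frac12$. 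Summing over all $x^2$ red--blue pairs, each agent appearing in exactly $x$ of them, yields $\SW(\vv)\ge x/2=n/4$ directly --- no matching of ``low-utility'' agents is needed, and no existence of an improving swap has to be exhibited. Your plan of pairing up agents with utility below $1/2$ and deriving a contradiction does not obviously close: a red agent and a blue agent can both have utility well below $1/2$ without either gaining from swapping with the other (their post-swap utilities depend on each other's neighbourhoods, not on their own). You also omit the zero-utility case entirely; the pairwise bound can fail when the blue agent in an adjacent pair has no blue neighbour at all, so the paper handles equilibria containing zero-utility agents by a separate structural case analysis (showing all zero-utility agents are of one colour and are adjacent to all ``unhappy'' agents of the other colour) before running the averaging argument.

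There are also problems with both lower bounds. For part~(i), the construction you describe (a large red clique attached to a few blue agents, with equilibrium welfare $O(1)$) is internally inconsistent: agents inside a large monochromatic clique have utility close to $1$, so any such equilibrium has welfare $\Theta(n)$, not $O(1)$. The instance that actually witnesses the $\Omega(n)$ gap is a star with $n-2$ red agents and $2$ blue agents: placing a blue agent at the centre is an equilibrium of welfare $1+\frac{1}{n-1}$ (every red leaf sees only the blue centre and gets utility $0$), while placing a red agent at the centre gives welfare at least $n-3$. The clique-plus-leaves topology you had in mind is the paper's lower bound for part~(ii) (equilibrium welfare $\frac{x^2}{x+1}$ versus roughly $2x$ after swapping the central red agent with a leaf), and for part~(ii) you give no construction at all, only a promise to exhibit one; as stated, neither lower bound is proved.
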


\begin{proof}
We prove each part separately.

\medskip

\noindent
{\bf Part(i).} 
For the lower bound, consider a $2$-swap game with $n-2$ red and $2$ blue agents who are to be positioned on the nodes of a star topology. Then, the assignment according to which one of the blue agents occupies the central node is an equilibrium with social welfare $1 + \frac{1}{n-1} \leq 2$, while the optimal assignment is such that the central node is occupied by a red agent for a social welfare of $n-3 + \frac{n-3}{n-1} \geq n-3$.

For the upper bound, consider any $2$-swap game with $n$ agents such that there are $n_r \geq 2$ red and $n_b$ blue agents, and let $\vv$ be any equilibrium assignment of this game. If there is any red agent $\ell$ with zero utility in $\vv$, then it cannot be the case that this agent is connected to all blue agents. If this were the case, then since there is another red agent and the graph is connected, at least one blue agent must be connected to this red agent, get utility strictly less than $1$, and have incentive to swap positions with agent $\ell$ so that they both increase their utility. Hence, in order for $\vv$ to be an equilibrium in the presence of $\ell$ getting zero utility, any blue agent not connected directly to $\ell$ must get utility $1$ so that she does not want to swap with $\ell$. Consequently, $\SW(\vv) \geq 1$. In the case where every agent has positive utility in $\vv$, since the graph is connected, it must be the case that every agent get utility at least $1/n$, and therefore again $\SW(\vv) \geq 1$. Now the bound follows since the optimal social welfare is at most $n$.

\medskip

\noindent
{\bf Part (ii).}
For the lower bound, consider a 
$2$-swap game with the following topology: 
there is a node $\alpha$ of degree $x+1$ that is 
connected to $x$ leaf nodes and to one node in a clique $C$ of size $x-1$. There is an equilibrium 
$\vv$ where $\alpha$ is occupied by a red agent $r$, all leaf nodes are occupied by blue agents, 
and all nodes of $C$ are occupied by red agents; see Figure~\ref{fig:poa-2-types}. Hence,
$$\SW(\vv) = x-1 + \frac{1}{x+1} = \frac{x^2}{x+1}.$$
On the other hand, for the assignment $\vv^*$ obtained from $\vv$ by swapping $r$ 
with one of the blue agents we have
$$\SW(\vv^*) = 2x-3 + \frac{x-2}{x-1} + \frac{x-1}{x+1}.$$
Hence,  the price of anarchy is at least
$$\frac{2x^3-x^2-5x+2}{x^2(x-1)},$$
an expression that takes it maximum value $667/324 \approx 2.05864$ at $x=9$.

For the upper bound, consider a $2$-swap game with $n=2x$ agents such that there are 
$x \geq 2$ red and $x$ blue agents. First, assume that some agents get zero utility in the 
equilibrium assignment $\vv$. Observe that it cannot be the case that there exist agents of both 
types who have zero utility in $\vv$. Indeed, if this was true for a non-adjacent red-blue pair, then 
these agents would have an incentive to swap and increase their utility from zero to $1$. 
On the other hand, suppose this is true for an adjacent red-blue pair $(r,b)$.
If both of $r$ and $b$ have other neighbors (besides $r$ and $b$), 
then by swapping they can increase their utility from $0$ to positive. 
Hence, suppose that $r$ occupies a leaf node and is connected only to $b$. 
Then, since the graph is connected, there must exist a blue agent $b' \neq b$ with utility in $(0,1)$ who would like to swap with $r$ to increase both of their utilities from $0$ to positive for $r$ and from strictly less than $1$ to $1$ for $b'$. 

Thus, assume that at least one blue agent has zero utility and all red agents have positive 
utility. We denote by $B_0$ the set of blue agents with zero utility, by $R_1$ the set of red agents 
with utility $1$, and by $R_<$ the set of red agents whose utility is strictly less than $1$. We 
have $|R_1|+|R_<|=x$, and each agent in $B_0$ is connected to all agents in $R_<$; otherwise 
a non-adjacent pair of agents $i \in R_<$, $j \in B_0$ would like to swap.
If $|R_<|=1$, then $|R_1|=x-1$. The utility of the unique agent $i \in R_<$ is 
$u_i(\vv) \geq \frac{1}{1+x}$, and thus 
$$\SW(\vv) \geq x-1+\frac{1}{1+x}.$$ 
Since the optimal social welfare is at most 
$n=2x$ and $x \geq 2$, the price of anarchy is at most 
$$\frac{2(1+x)}{x} = 2 + \frac{2}{x} \leq 3.$$ 
Now, assume that $|R_<|\geq 2$. If $|B_0|\geq 2$, then any agent in $B_0$ 
would like to swap with any agent in $R_<$ to 
get positive utility (since each agent in $R_<$ is connected to all agents in $B_0$). Thus, 
in $\vv$ no agent $i \in R_<$ wants to swap with any agent in $B_0$. Since each agent in 
$B_0$ is connected to all agents in $R_<$ and no other agent, the utility that agent $i$ would get 
by agreeing to swap is $u_i(\vv) = \frac{|R_<|-1}{|R_<|} \geq 1/2$, yielding 
$$\SW(\vv) \geq |R_1|+|R_>|/2 \geq x/2.$$ 
Since the maximum welfare is $2x$, 
we can upper-bound the price of anarchy by $4$. 
If $|B_0|=1$, then if $i$ is connected to another blue agent (not in $B_0$), 
her utility is at least $1/2$ for the same reason as before. Otherwise, $i$ is connected 
to red agents only and the one agent in $B_0$, so $u_i(\vv) \geq 1/2$ (only one red agent in 
the worst case), which again yields an upper bound of $4$ on the price of anarchy.

Next, we assume that all agents have positive utility. 
Since $\vv$ is an equilibrium, for every red-blue pair of agents it holds that
at least one of them has no incentive to swap positions. 
Let $(i,j)$ be a red-blue pair of agents, and assume that $i$ does not want to swap. 
If $i$ and $j$ are {\em not} neighbors in $\vv$, then it must be that
$u_i(\vv) \geq 1 - u_j(\vv)$ and hence  $u_i(\vv) + u_j(\vv) \geq 1.$
Otherwise, $i$ and $j$ are neighbors in $\vv$. 
It may be the case that $u_i(\vv) + u_j(\vv) < 1$, in which case $u_i(\vv) \in (0,1)$ 
and $u_j(\vv) \in (0,1)$. Assume that the blue agent $j$ has $x_r \geq 0$ red neighbors 
besides $i$, and $x_b \geq 1$ blue neighbors. Then, 
$u_j(\vv) = \frac{x_b}{x_r+x_b+1}$, and
\begin{align} \nonumber
u_i(\vv) &\geq \frac{x_r}{x_r+x_b+1} = 1 - u_j(\vv) - \frac{1}{x_r+x_b+1} \\ \label{eq:poa-tight}
&\geq \frac{1}{2} - u_j(\vv) \Leftrightarrow u_i(\vv) + u_j(\vv) \geq \frac{1}{2},
\end{align}
where the inequality follows since $\frac{1}{x_r+x_b+1}$ is decreasing in $x_r \geq 0$ 
and $x_b\geq 1$. Therefore, in any case we have $u_i(\vv) + u_j(\vv) \geq \frac{1}{2}$, 
for every red-blue pair of agents $i$ and $j$.
Since there are $x^2$ distinct red-blue pairs, and each agent participates in exactly $x$ 
such pairs, by summing over all these inequalities, we obtain 
$x \cdot \SW(\vv) \geq \frac{1}{2} x^2$ and therefore $\SW(\vv) \geq \frac{1}{2}x$.
The bound then follows since the maximum possible social welfare is $n=2x$.
\end{proof}

We remark that even though the upper bound of $4$ for the case where each types consists of the same number of agents (and every agent has positive utility at equilibrium) is probably not tight, one cannot expect to improve it using the same technique. In particular, to prove this upper bound, we focused on an arbitrary pair of agents $(i,j)$, and used the equilibrium definition, according to which at least one of these agents does not want to swap positions. Then, we were able to show that $u_i(\vv) + u_j(\vv) \geq 1/2$. We now argue that this inequality is actually tight, and hence to improve the upper bound one needs to argue in more detail about the structure of the equilibrium. Consider a variant of the topology depicted in Figure~\ref{fig:poa-2-types} in which every leaf node is connected to another leaf node. Then, for the depicted assignment $\vv$, the red agent $\pi_\alpha(\vv)$ has utility $\frac{1}{x+1}$, while any blue agent has utility exactly $1/2$. Hence, the sum of the utility of the red agent $\pi_\alpha(\vv)$ and the utility of any blue agent is almost $1/2$ as $x$ becomes large.

We continue by showing that, surprisingly, for three types or more, the social price of anarchy can be unbounded, even for the special case of equal number of agents per type.

\begin{figure}[t]
\centering
\includegraphics[scale=0.4]{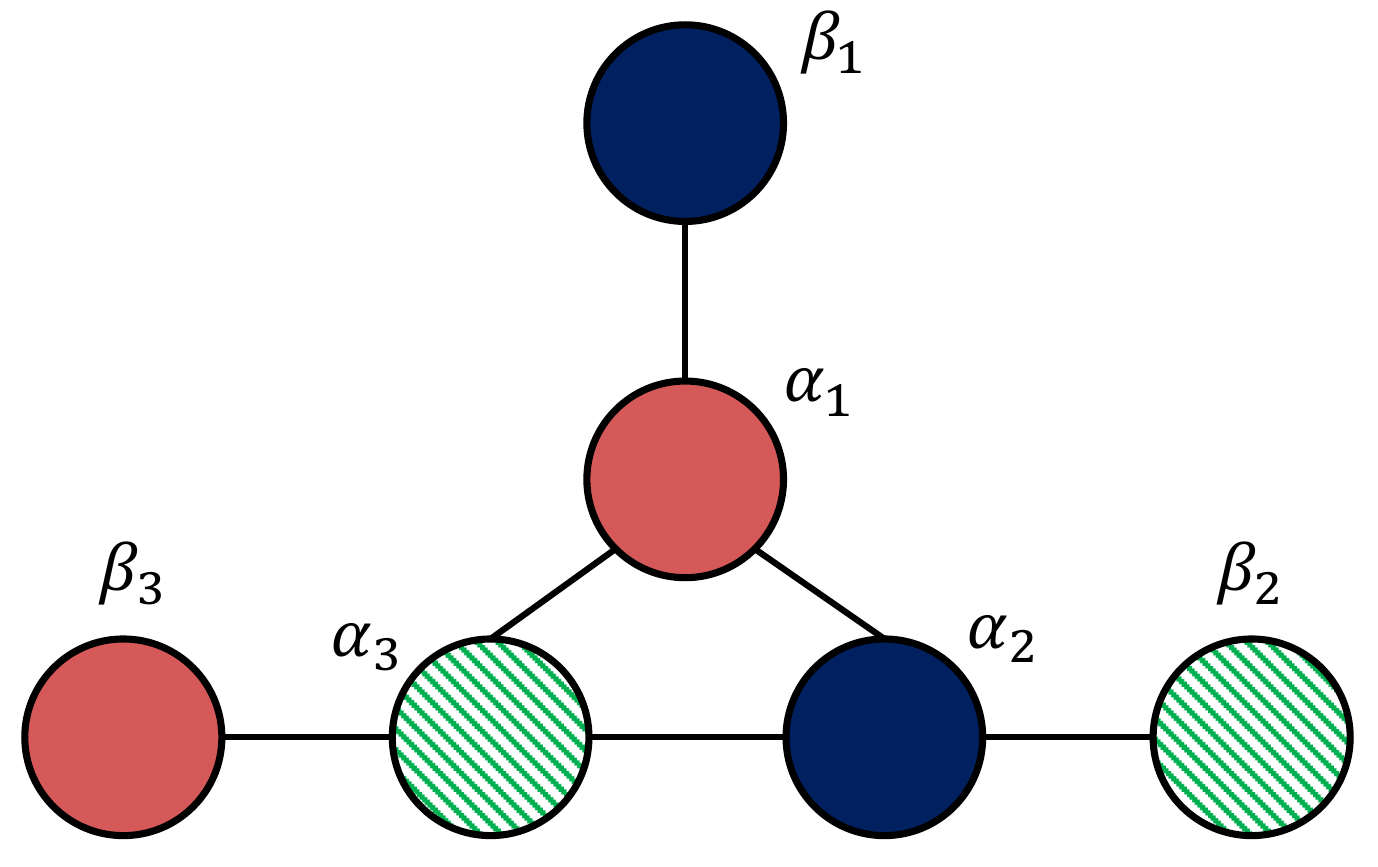}
\caption{The topology and the equilibrium assignment of the $k$-swap game considered in the proof of Theorem~\ref{thm:poa-3-types} for $k=3$. Here, $T_1=\text{red}$, $T_2=\text{blue}$ and $T_3=\text{green}$.}
\label{fig:poa-3-types}
\end{figure}

\begin{theorem}\label{thm:poa-3-types}
For every $k\geq 3$, the social price of anarchy of $k$-swap games can be unbounded, even when there is an equal number of strategic agents per type.
\end{theorem}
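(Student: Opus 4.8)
The plan is to exhibit, for each $k \geq 3$, a family of $k$-swap games parametrized by some size parameter $x$, each with an equal number of strategic agents per type, such that there is an equilibrium assignment $\vv$ with $\SW(\vv)$ bounded by a constant (in fact by something like $O(1)$), while the optimal assignment $\vv^*_\SW$ has social welfare growing with $x$, say $\Omega(x)$ or $\Omega(n)$. Then $\PoA_\SW \geq \SW(\vv^*_\SW)/\SW(\vv) \to \infty$, which gives the claim. It suffices to do this for $k = 3$ and then pad with extra types whose agents are forced, at equilibrium, into already-monochromatic regions so they contribute nothing to the ratio; the figure referenced in the statement suggests exactly a three-colour gadget, so I would build that first and treat general $k$ as a routine extension.

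The core construction I would use (matching Figure~\ref{fig:poa-3-types}) is a small "hub" structure where three monochromatic cliques — one red, one blue, one green, each of size roughly $x$ — are attached to a few connector nodes, arranged so that in the bad equilibrium each connector node is occupied by an agent whose neighbourhood is overwhelmingly of the wrong colours, giving that agent utility close to $0$, yet no swap is jointly improving. The key point to engineer is the \emph{stability} of this bad assignment: for any strategic pair $i$ (a low-utility connector agent) and $j$, swapping $i$ into a clique would help $i$ but would strictly hurt $j$, since $j$ sits deep inside a clique of her own colour and currently has utility $1$ (or very close to it); and swapping two connector agents of different colours leaves both of them still surrounded mostly by the same wrong-coloured clique neighbours, so neither strictly gains. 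Meanwhile, the agents inside each clique all have utility $1$ regardless, so the only "wasted" welfare is that of the $O(1)$ connector agents — but because the cliques are large, a \emph{different} assignment (the optimum) can place almost everyone adjacent to their own colour: put each large colour class entirely inside its own clique plus have it spill over onto the connectors in a colour-consistent way, so that all but $O(1)$ agents get utility $1$, yielding $\SW(\vv^*_\SW) = \Omega(x)$.

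Concretely, the steps in order: (1) describe the topology — three cliques $C_1, C_2, C_3$ of size $\Theta(x)$, a constant number of connector/hub nodes joining them, with the key requirement that every clique node has its clique entirely within its closed neighbourhood; (2) describe the bad assignment $\vv$: fill $C_\ell$ with type-$T_\ell$ agents and place the leftover agents on the connectors so that each connector agent is adjacent to at most a vanishing fraction of her own type; (3) verify $\vv$ is an equilibrium by the case analysis above — swaps with clique-interior agents fail because those agents have utility $1$, and swaps among connector agents fail because the dominant clique-neighbours don't change colour; (4) compute $\SW(\vv) = O(1)$; (5) exhibit $\vv^*_\SW$ achieving $\Omega(x)$ welfare; (6) for $k > 3$, add $k-3$ further colour classes, each forming its own large monochromatic clique hung off the gadget with its own connector, so that the same equilibrium/optimum dichotomy persists and the equal-size-per-type condition is maintained by choosing all clique sizes equal.

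The main obstacle is step (3): getting the numbers to line up so that the bad assignment is genuinely an equilibrium rather than merely having low welfare. In particular I must ensure (a) connector agents cannot profitably swap \emph{with each other} — this forces the connectors to be set up symmetrically so that a type-$T_1$ agent on a connector sees essentially the same (wrong) colour mix as a type-$T_2$ agent on another connector, so swapping is not strictly improving for either; and (b) no clique-interior agent ever strictly benefits from a swap, which is automatic once every such agent already has utility $1$, so the real work is arranging (a) together with making sure the connectors themselves are not accidentally adjacent to enough same-colour clique nodes to create a profitable move. I expect this to require a careful choice of how many connector nodes there are and exactly which clique nodes they attach to; once that combinatorial bookkeeping is fixed, steps (4)–(6) are routine arithmetic.
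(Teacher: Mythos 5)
Your construction cannot yield an unbounded price of anarchy, and the inconsistency is already visible inside your own sketch: you claim in step (4) that $\SW(\vv)=O(1)$ for the bad equilibrium, but you also fill each clique $C_\ell$ (of size $\Theta(x)$) monochromatically with type-$T_\ell$ agents and observe that ``the agents inside each clique all have utility $1$ regardless.'' Those two statements contradict each other: with $\Theta(x)$ clique-interior agents each earning utility $1$, the bad equilibrium has $\SW(\vv)=\Theta(x)$, while the optimum is trivially at most $n=\Theta(x)$, so your ratio tends to a constant (in fact to $1$, since only the $O(1)$ connector agents lose anything). This is precisely the phenomenon behind the bounded PoA for $k=2$ in Theorem~\ref{thm:poa-2-types}(ii); large monochromatic regions in the equilibrium guarantee large equilibrium welfare and hence a bounded ratio. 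To get an unbounded ratio you need an equilibrium whose welfare is $o(\mathrm{OPT})$ --- ideally exactly $0$ --- and no construction in which a linear fraction of the agents sit inside same-type cliques can achieve that.

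The missing idea is that the bad equilibrium must isolate \emph{every} strategic agent from all of her friends while still leaving no jointly improving swap. The paper does this with a small fixed instance rather than a growing family: $n=2k$ agents, two per type, on a $k$-cycle $\alpha_1,\dots,\alpha_k$ with a pendant leaf $\beta_\ell$ attached to each $\alpha_\ell$. Placing a type-$T_\ell$ agent on $\alpha_\ell$ and a type-$T_{\ell+1}$ agent on $\beta_\ell$ (indices mod $k$) gives every agent utility $0$, so $\SW(\vv)=0$, yet one checks that no pair of agents can both strictly gain by swapping (the agent who would like to move next to her unique friend is always blocked by a partner who gains nothing). Since pairing the two agents of each type on $\{\alpha_\ell,\beta_\ell\}$ gives positive welfare, the ratio is $\infty$. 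If you want to repair your approach you would have to abandon the monochromatic cliques entirely and engineer zero utility for everyone, at which point you are essentially forced into a gadget of this cyclic, ``everyone's friend is one step away but the swap partner never benefits'' type.
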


\begin{proof}
Consider a $k$-swap game with $n=2k$ agents such that there are exactly two agents of each of the $k \geq 3$ types $T_1, \dots, T_k$. The topology $G$ consists of $k$ nodes $\{\alpha_1, ..., \alpha_k\}$ that form a cycle, and each node $\alpha_\ell$, $\ell \in [k]$ is also connected to an auxiliary node $\beta_\ell$; see Figure~\ref{fig:poa-3-types} for the topology and the equilibrium assignment discussed in the following for $k=3$.

Let $\vv$ be the assignment according to which node $\alpha_\ell$ is occupied by an agent of type $T_\ell$, while node $\beta_\ell$ is occupied by an agent of type $T_{\ell+1}$, where the subscripts are mod $\ell$.
This assignment is clearly an equilibrium since there exists no pair of agents that would like to swap positions, and every agent has zero utility.  
In particular, observe that agent $\pi_{\alpha_{\ell+1}}(\vv)$ of type $T_{\ell+1}$ would like to move only to node $\alpha_\ell$ in order to connect with the agent $\pi_{\beta_\ell}(\vv)$ who is also of type $T_{\ell+1}$. 
However, the agent $\pi_{\alpha_\ell}(\vv)$ of type $T_\ell$ has no incentive to move to node $\alpha_{\ell+1}$ since the other agent of type $T_\ell$ is at node $\beta_{\ell-1}$. 
Now, consider agent $\pi_{\beta_\ell}(\vv)$ of type $T_{\ell+1}$ who is connected only to an agent of type $T_\ell$ located at $\alpha_\ell$. The only agent that would like to swap positions with $\pi_{\beta_\ell}(\vv)$ is $\pi_{\beta_{\ell-1}}(\vv)$ who is of type $T_\ell$. However, such a swap cannot increase the utility of $\pi_{\beta_\ell}(\vv)$ since agent $\pi_{\alpha_{\ell-1}}(\vv)$ is of type $T_{\ell-1} \neq T_{\ell+1}$. Therefore, $\vv$ is an equilibrium with $\SW(\vv)=0$.

On the other hand, consider the assignment $\vv^*$ according to which nodes $\alpha_\ell$ and $\beta_\ell$ are occupied by the two agents of type $T_\ell$, for every $\ell \in [k]$. Since every agent has now positive utility, $\SW(\vv^*) > 0$, and the social price of anarchy is unbounded.
\end{proof}

Next, we turn our attention to the social price of stability and show a constant lower bound for $2$-swap games. 

\begin{figure}[t]
\centering
\includegraphics[scale=0.4]{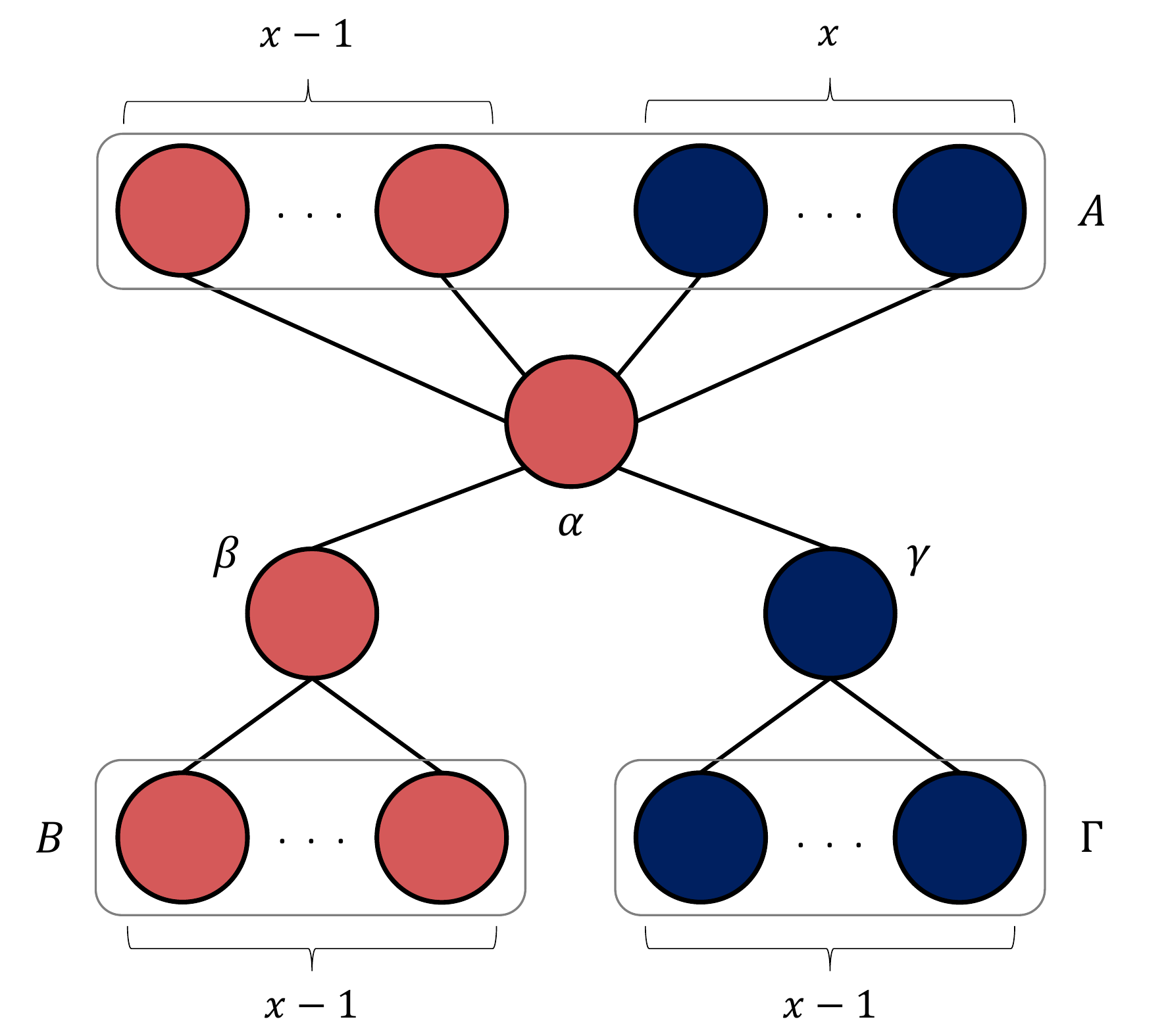}
\caption{The topology and the best equilibrium assignment for the $2$-swap games considered in the proof of Theorem~\ref{thm:pos}.}
\label{fig:pos}
\end{figure}

\begin{theorem}\label{thm:pos}
The social price of stability of $2$-swap games is at least $4/3$.
\end{theorem}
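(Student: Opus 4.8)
The plan is to exhibit a single small $2$-swap game, describe all of its equilibria, and verify that the best equilibrium has social welfare exactly $3/4$ of the optimum. A natural candidate is the family of topologies already used in this section (Figure~\ref{fig:pos} suggests this): take a central node $\alpha$ joined to a clique $C$ that will be filled entirely by red agents, together with a small number of pendant (leaf) or near-leaf nodes that the blue agents will occupy. The key structural feature we want is that putting a blue agent at $\alpha$ is stable — the blue agent at the center is happy enough and no red agent in the clique can profitably swap with it — while the welfare-optimal configuration puts a red agent at $\alpha$ but is \emph{not} an equilibrium, so the game is forced into the inferior stable configuration. By choosing the sizes of the clique and the number of leaves appropriately, the ratio of optimal welfare to best-equilibrium welfare should tend to $4/3$.

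Concretely, I would proceed in the following steps. First, fix the instance: specify the number of red agents $n_r$, the number of blue agents $n_b$, and the exact topology, likely with the clique size and leaf count parameterized so that one can either take a limit or optimize to hit $4/3$. Second, identify the claimed best equilibrium $\vv$ — red agents fill the clique $C$, one blue agent sits at $\alpha$, the remaining blue agents sit at the leaves (as in Figure~\ref{fig:pos}) — and compute $\SW(\vv)$: the clique reds each have utility close to $1$, the blue agent at $\alpha$ has some intermediate utility, and leaf blues have utility $0$ (or whatever the adjacency dictates). Third, verify $\vv$ is an equilibrium: the only plausible swap is between the blue agent at $\alpha$ and a red agent in the clique (or a leaf blue and a clique red), and one checks that in every such swap at least one of the two agents weakly loses — the clique red would drop from near-$1$ utility, and a leaf blue gains nothing because its single neighbour is red. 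Fourth, compute the optimal welfare $\SW(\vv^*_{\SW})$, which is achieved (or approached) by placing a red agent at $\alpha$; this boosts the clique reds to full utility $1$ and gives $\alpha$'s red occupant a positive contribution, but crucially this assignment admits a profitable swap (some blue agent adjacent to $\alpha$'s red occupant wants in), so it is not stable. Fifth, take the ratio and check it equals (or approaches) $4/3$.

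The main obstacle, and the part that needs the most care, is step three: ruling out \emph{every} profitable swap in the claimed equilibrium, and more importantly ensuring that there is no \emph{other} equilibrium of this game with strictly higher social welfare — the price of stability is an infimum over all equilibria, so a lower bound of $4/3$ requires that \emph{no} equilibrium does better than welfare $\tfrac{3}{4}\,\SW(\vv^*)$. This means I must enumerate the possible equilibrium configurations: essentially, whoever occupies $\alpha$ must be stable there, and I would argue that (i) if a red agent occupies $\alpha$ the configuration is never an equilibrium (some blue wants to swap in, as in the optimal assignment), and (ii) any equilibrium with a blue agent at $\alpha$ has welfare at most that of $\vv$ up to lower-order terms. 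The clique being monochromatic red in any equilibrium (no blue agent would tolerate sitting inside an all-red clique, and swapping a clique red out is blocked) is the lever that makes this enumeration tractable. Balancing the clique size against the leaf count so that both the equilibrium verification goes through \emph{and} the ratio is exactly $4/3$ (rather than some nearby constant) is the final bit of bookkeeping.
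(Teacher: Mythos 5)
Your proposal is a plan rather than a proof, and the gap it leaves open is exactly the part that carries all the difficulty. For a price-of-stability lower bound you must show that \emph{every} equilibrium of your instance has welfare at most $\tfrac34$ of the optimum; you name this obstacle yourself in your last paragraph, but you never carry out the enumeration, never fix the instance parameters, and never verify any of the structural claims (e.g.\ ``if a red agent occupies $\alpha$ the configuration is never an equilibrium''). As it stands there is no instance, no welfare computation, and no case analysis, so nothing is actually proved.

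Moreover, the specific topology you gesture at --- a central node $\alpha$ attached to a clique $C$ plus some leaves --- is the one from Figure~\ref{fig:poa-2-types}, used in the paper for the price-of-\emph{anarchy} lower bound, not the tree of Figure~\ref{fig:pos}. That family is a poor candidate here: in the clique-plus-leaves instance the assignment with a blue agent at $\alpha$ and the clique monochromatic red (the high-welfare assignment) can be checked to be an equilibrium itself --- the zero-utility red leaf agent cannot find a willing partner, and no clique red gains by moving to $\alpha$ --- so that instance appears to have price of stability close to $1$, and your claim (i) (that a red agent at $\alpha$ is never stable) is in fact false for the paper's version of that topology, where red-at-$\alpha$ \emph{is} the bad equilibrium. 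The paper instead builds a tree ($\alpha$ with two internal children $\beta,\gamma$ carrying $x$ leaves each, plus $2x-1$ leaves directly under $\alpha$, with $2x+1$ agents of each color) and runs a four-way case analysis on how many blue agents sit in the leaves adjacent to $\alpha$, showing every equilibrium has welfare at most $3x+2$ while a (non-equilibrium) assignment achieves at least $4x-1$. You would need either to reproduce an analysis of that completeness for a concretely specified instance, or to find a different instance for which the ``all equilibria are bad'' claim can actually be established; your current sketch does neither.
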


\begin{proof}
Let $x \geq 3$ be a parameter, and consider a $2$-swap game with $2x+1$ red and $2x+1$ blue agents. The topology is a tree with a root node $\alpha$, which is connected to two nodes $\beta$ and $\gamma$, as well as to a set $A$ of $2x-1$ leaf nodes. Moreover, node $\beta$ is connected to a set $B$ of $x$ leaf nodes, and node $\gamma$ is connected to a set $\Gamma$ of $x$ more leaf nodes. The topology and the best equilibrium assignment (which we discuss below) are depicted in Figure~\ref{fig:pos}.

We will now argue about the structure of any equilibrium for this particular swap game. Without loss of generality, we assume that the root node $\alpha$ is occupied by a red agent, and switch between a few cases depending on the number of blue agents that occupy the lead nodes of set $A$ that are directly connected to $\alpha$.

First, assume that there are at least $x+1$ blue agents at the nodes of set $A$. Then, there can be at most $x-2$ red agents at the nodes of $A$, which means that the remaining at least $x+2$ red agents need to occupy nodes of the $\beta$- and $\gamma$-subtrees. Since any of these subtrees have a total of $\ell+1$ nodes, at least one of these red agents, say agent $i$, must be connected to at least one blue agent. Clearly, such an assignment cannot be an equilibrium since agent $i$ and any of the blue agents at the nodes of $A$ have incentive to swap positions to increase their utilities from strictly smaller than $1$ and $0$ to $1$ and positive, respectively.

Second, assume that there are exactly $x$ blue agents at the nodes of set $A$, and hence the remaining $x-1$ nodes of $A$ are occupied by red agents. Then, it is easy to verify that the only equilibrium assignment $\vv_1$ (up to symmetries) is such that all nodes of the $\beta$-subtree are occupied by red agents, and all nodes of the $\gamma$-subtree are occupied by blue agents. The social welfare of this equilibrium is
$$\SW(\vv_1) = 3x + \frac{x}{x+1} + \frac{x}{2x+1} \leq 3x+2.$$

Third, assume that the number of blue agents at the nodes of set $A$ is between $1$ and $x-1$. Then, there are at least $x$ red agents at the nodes of $A$. Since any of the $\beta$- and $\gamma$-subtrees have a total of $x+1$ nodes, at least one of the remaining at most $x$ red agents, say agent $i$, must necessarily be connected to some blue agent. As in the first case, such an assignment cannot be an equilibrium since agent $i$ and any of the blue agents at the nodes of $A$ have incentive to swap positions to increase their utilities from strictly smaller than $1$ and $0$ to $1$ and positive, respectively.

Finally, assume that all nodes of $A$ are occupied by red agents, and there is only one remaining red agent $i$, who will inevitably be connected to some blue agents. No assignment $\vv_2$ according to which $i$ occupies a leaf node of $B$ (or $\Gamma$) can be an equilibrium, since $i$ and the blue agent $\pi_\gamma(\vv_2)$ (or $\pi_\beta(\vv_2)$) have incentive to swap positions and increase their utilities from $0$ and strictly smaller than $1$ to positive and $1$, respectively. Hence, in any equilibrium assignment $\vv_2$, agent $i$ occupies either node $\beta$ or node $\gamma$. The social welfare is
$$\SW(\vv_2) = 3x-1 + \frac{2x}{2x+1} + \frac{1}{x+1} + \frac{x}{x+1} \leq 3x+1.$$

Now consider the assignment $\vv^*$ according to which the red agents occupy node $\alpha$, all nodes of $A$, and one node of $B$, while all other nodes are occupied by blue agents. The social welfare of this assignment is
$$\SW(\vv^*) = 4x-1 + \frac{2x-1}{2x+1} + \frac{x-1}{x+1} \geq 4x-1.$$
Therefore, the social price of anarchy is at least $\frac{4x-1}{3x+2}$, which tends to $4/3$ as $x$ tends to infinity.
\end{proof}

For $k$-swap games with topology that is a $\delta$-regular graph (in which all nodes have degree equal to $\delta$), we show an upper bound of $1$ on the social price of stability by exploiting a potential function, similar to the one defined by \citet{CLM18} and \citet{Echzell2019dynamics} to show the existence of equilibria in such games. 

\begin{theorem}\label{thm:pos-regular}
The social price of stability in $k$-swap games with topology that is a $\delta$-regular graph is $1$.
\end{theorem}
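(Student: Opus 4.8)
The plan is to exhibit a potential function $\Phi$ on assignments that strictly increases with every profitable swap, and whose global maximizers are exactly the assignments maximizing social welfare; since a $\delta$-regular topology guarantees such maximizers are equilibria, this yields $\PoS_{\SW}=1$. The natural candidate, following \citet{CLM18} and \citet{Echzell2019dynamics}, is $\Phi(\vv) = \sum_{i\in R}|N_i(\vv)\cap F_i|$, i.e.\ (twice) the number of monochromatic edges incident to strategic agents, or equivalently $\Phi(\vv)=\delta\cdot\SW(\vv)$ on a $\delta$-regular graph since $|N_i(\vv)|=\delta$ for every $i$. The key observation is that on a $\delta$-regular graph the utility $u_i(\vv)=|N_i(\vv)\cap F_i|/\delta$ is just a rescaling of the friend-count, so maximizing $\Phi$ is literally the same as maximizing $\SW$.

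First I would verify that $\Phi$ is a (generalized ordinal) potential for the swap dynamics: if agents $i$ and $j$ (necessarily of different types) swap and both strictly improve, then $u_i(\vv^{i\leftrightarrow j})>u_i(\vv)$ and $u_j(\vv^{i\leftrightarrow j})>u_j(\vv)$, so $|N_i(\vv^{i\leftrightarrow j})\cap F_i|>|N_i(\vv)\cap F_i|$ and likewise for $j$. The only subtlety is that the swap also changes the neighborhoods of third parties (agents adjacent to $v_i$ or $v_j$) and possibly the edge $\{v_i,v_j\}$ itself; so I would write $\Phi$ as a sum over edges, $\Phi(\vv)=\sum_{\{v,w\}\in E}\big[\,\mathbbm{1}[\pi_v(\vv),\pi_w(\vv)\text{ friends, at least one strategic}]\,\big]\cdot c_{vw}$ with appropriate weights, and check that the contribution of every edge not incident to $v_i$ or $v_j$ is unchanged, the edge $\{v_i,v_j\}$ (if present) contributes the same before and after since $\{i,j\}$ and $\{j,i\}$ are the same unordered pair, and the net change over the edges incident to $v_i$ equals the change in $i$'s friend-count while the net change over edges incident to $v_j$ equals the change in $j$'s friend-count. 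Summing, $\Phi$ strictly increases. Care is needed with edges incident to, say, $v_i$ whose other endpoint is stubborn: these still contribute to $\Phi$ via the stubborn-neighbor term, but their change is still captured by $i$'s friend-count since stubborn friends of $i$ count in $|N_i\cap F_i|$.

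Next, since $\Phi$ takes finitely many values and strictly increases along any sequence of profitable swaps, best-response dynamics terminates at an equilibrium; in particular $\eq(\calG)\neq\varnothing$ for every $\delta$-regular instance. Moreover, let $\vv^*$ be an assignment maximizing $\SW$; because $\Phi=\delta\cdot\SW$ on a $\delta$-regular graph, $\vv^*$ also maximizes $\Phi$, hence admits no profitable swap (any such swap would strictly increase $\Phi$), so $\vv^*$ is itself an equilibrium. Therefore the best equilibrium achieves the optimal social welfare and $\PoS_{\SW}=1$.

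The main obstacle is purely bookkeeping: making the edge-decomposition of $\Phi$ precise enough that the ``third-party'' edge contributions provably cancel, and handling the mixed cases (strategic/stubborn endpoints, the swapped edge itself) without double-counting. Once the identity ``$\Phi$ increases $\iff$ some agent's friend-count increases in a profitable swap'' is nailed down, the rest is immediate from finiteness and the $\delta$-regularity rescaling. I would also note explicitly that the result only claims $\PoS=1$, not a bound on $\PoA$, so no further argument about worst equilibria is needed.
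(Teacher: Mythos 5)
Your proposal is correct and follows essentially the same route as the paper: the same potential $\sum_{i\in R}|N_i(\vv)\cap F_i|$ (the paper's $\overline{\Phi}$), the same identity $\overline{\Phi}(\vv)=\delta\cdot\SW(\vv)$ on a $\delta$-regular topology, and the conclusion that the welfare-optimal assignment cannot be improved by any swap. The only cosmetic difference is that you observe directly that the maximizer of $\overline{\Phi}$ is itself an equilibrium, whereas the paper runs improving dynamics from $\vv^*$ to an equilibrium with at least as large a potential; the two arguments are interchangeable.
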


\begin{proof}
\citet{Echzell2019dynamics} showed that for $k$-swap games with a $\delta$-regular topology, 
$$\Phi(\vv) = \sum_{i \in R} |N_i(\vv) \setminus F_i|$$ 
is a potential minimization function. 
Using similar arguments, we can show that the complement,
$$\overline{\Phi}(\vv) = \sum_{i \in R} |N_i(\vv) \cap F_i|,$$
is a potential maximization function. Consider any pair of agents $(i,j)$ such that $i$ is of type $T_x$ and $j$ is of type $T_y$, with $y \neq x$. Since $i$ and $j$ swap positions if and only if they can both increase their utility, and since $N_i(\vv)=N_j(\vv)=\delta$ for any assignment $\vv$, we have that
$$|N_i(\vv) \cap F_i| < |N_i(\vv^{i \leftrightarrow j})\cap F_i|
\text{ \ \  and \ \ } 
|N_j(\vv) \cap F_j| < |N_j(\vv^{i \leftrightarrow j})\cap F_j|$$
Any agent $\ell \in (N_i(\vv) \cap F_i) \cup ( N_j(\vv) \cap F_j)$ has one less friend in $\vv^{i \leftrightarrow j}$ than in $\vv$, and hence
$$|N_\ell(\vv^{i \leftrightarrow j}) \cap F_\ell| = |N_\ell(\vv) \cap F_\ell| - 1.$$
On the other hand, any agent $\ell \in (N_i(\vv^{i \leftrightarrow j})\cap F_i) \cup (N_j(\vv^{i \leftrightarrow j})\cap F_j)$ has one more friend in $\vv^{i \leftrightarrow j}$ than in $\vv$, and hence
$$|N_\ell(\vv^{i \leftrightarrow j}) \cap F_\ell| = |N_\ell(\vv) \cap F_\ell| + 1.$$
For any other agent, the friends they have as neighbors have not changed. 
Therefore, we can now easily see that 
$$\overline{\Phi}(\vv^{i \leftrightarrow j}) - \overline{\Phi}(\vv) > 0,$$ 
as desired.

Now, observe that by the definition of the utility of each strategic agent and the fact that the topology is $\delta$-regular, for any assignment $\vv$, we have that
\begin{align}  \label{eq:potential-sw}
\SW(\vv) &= \sum_{i \in R} u_i(\vv) = \sum_{i \in R} \frac{|N_i(\vv) \cap F_i|}{|N_i(\vv)|} 
= \frac{1}{\delta} \cdot \sum_{i \in R} |N_i(\vv) \cap F_i| = \frac{1}{\delta} \cdot \overline{\Phi}(\vv).
\end{align}
Let $\vv^*$ be an optimal assignment. If $\vv^*$ is an equilibrium, then the social price of stability is $1$. Otherwise, we let the strategic agents play and swap positions until they reach an equilibrium $\vv$. Since $\overline{\Phi}$ is a potential maximization function, we have that $\overline{\Phi}(\vv) \geq \overline{\Phi}(\vv^*)$, and by \eqref{eq:potential-sw}, we obtain
\begin{align*}
\SW(\vv) = \frac{1}{\delta} \cdot \overline{\Phi}(\vv) 
\geq \frac{1}{\delta} \cdot \overline{\Phi}(\vv^*) = \SW(\vv^*),
\end{align*}
and the bound follows by rearranging terms.
\end{proof}

Next, we focus on the problem of computing assignments of high social welfare. Observe that whether the agents are allowed to pairwise swap positions or jump to empty nodes has no effect in the complexity of this problem, and hence we already known that it is NP-complete by the work of \citet{Elkind2019jump}. 
However, one of their main assumptions is that the topology is a graph with strictly more nodes than agents (so that there are empty nodes where the agents can jump to). Consequently, their proof does not cover our case, where the topology consists of a number of nodes that is exactly equal to the number of agents. The proof of our next theorem is fundamentally different and subsumes that of \citet{Elkind2019jump} for $k \geq 3$; for $k=2$, we were unable to prove the hardness of the problem. 

\begin{theorem}
For every $k \geq 3$, given a rational number $\xi$, it is NP-complete 
to decide whether there exists an allocation that has social welfare at least $\xi$.
\end{theorem}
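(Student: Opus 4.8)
The plan is to reduce from a variant of an exact-cover or graph-partition problem, building a $k$-swap game all of whose agents are strategic, so that high social welfare is achievable exactly when the combinatorial structure admits the desired solution. A natural candidate is a reduction from \textsc{Clique} (as in the previous theorem) or from \textsc{$3$-Partition}/\textsc{Exact Cover}; given that the earlier non-existence and hardness constructions in this paper use a ``gadget graph plus many attached nodes occupied by agents of each colour'' template, I would try to reuse that machinery. Concretely, I would take an instance of the source problem and construct a topology in which there is a ``selection region'' whose nodes, when occupied by agents of a single type, realise a combinatorial object (a clique, a cover element, etc.), and a ``filler region'' consisting of large cliques or complete bipartite blocks of fixed colour that pin down the behaviour of all remaining agents. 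The point of the filler blocks, exactly as in the previous proof, is to force the social-welfare-maximising assignment to place the ``free'' agents into the selection region, so that $\SW(\vv) \ge \xi$ if and only if the free agents can be placed to simultaneously achieve near-maximal utility, which in turn encodes the combinatorial constraint.

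The key steps, in order, would be: (1) Membership in NP is immediate, since given an assignment one computes $\SW(\vv)$ in polynomial time and compares it to $\xi$. (2) Fix the source problem and describe the gadget: a graph $H$ (from the source instance), together with attached ``padding'' nodes $W_v$ for each $v$, chosen so that every node of the selection region has a common degree $d$ and a prescribed number of own-colour padding neighbours; this is precisely the trick used to define $G_1$ in the proof of the preceding theorem, and it makes a free agent's utility depend \emph{only} on how many other free agents of its own type sit in the selected set. (3) Add several large monochromatic filler cliques (or complete bipartite blocks) whose internal utilities are strictly worse than what a ``successfully placed'' free agent gets but strictly better than what a ``badly placed'' one gets, so that in any assignment of welfare $\ge \xi$ all free agents are in the selection region and all agents in the filler region are at their equilibrium-like positions. (4) Set $\xi$ to the total utility attained when the free agents realise a valid solution of the source instance. (5) Prove both directions: a valid solution yields an assignment of welfare $\ge \xi$ by direct computation; conversely, welfare $\ge \xi$ forces the free agents into a configuration encoding a valid solution, because any invalid placement makes at least one free agent's own-colour neighbour count drop by one, costing a non-negligible $\Theta(1/d)$ in welfare, which cannot be recovered elsewhere. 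Handling $k \ge 3$ rather than $k = 2$ gives the extra flexibility of a third colour, which one can use exactly as in the theorem above — to separate the filler region into colour-coded subregions that cannot be disrupted.

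The main obstacle I anticipate is the \emph{exactness of the welfare accounting}: unlike a decision problem about equilibria, here one must argue that no cleverly unbalanced assignment — one that sacrifices the filler region to over-satisfy a few free agents, or one that leaves some selection-region nodes to filler agents — can reach the threshold $\xi$. This requires the padding sizes and filler-block sizes to be tuned so that every ``type'' of deviation from the intended structure has a strictly negative net effect on $\SW$, and the bookkeeping must be tight enough that the gap between the honest optimum and the best ``cheating'' assignment is positive for all sufficiently large instances. A secondary subtlety is ensuring the constructed graph is connected and that the free agents genuinely have the claimed common degree in the selection region (so that swapping within the region is welfare-neutral up to the own-colour count), which is why I would import the $W_v$-padding device essentially verbatim. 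Once the gadget parameters are fixed, both directions of the correctness proof should reduce to the kind of routine arithmetic comparisons already carried out in the preceding proofs, so the writeup would devote most of its space to specifying the construction and the single inequality that certifies the reduction.
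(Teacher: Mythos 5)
There is a genuine gap here, and it stems from a mismatch between your choice of source problem and the structure of the social-welfare objective. With degree-normalised padding (which you correctly import), the welfare of an assignment is, up to an additive constant and a factor $2/d_0$, exactly the number of \emph{monochromatic} edges of the underlying graph, i.e.\ $|Y|$ minus the size of the cut between the colour classes. This is a balanced min-cut objective, not a clique or exact-cover objective: it counts same-colour edges inside \emph{every} colour class, so a loss of one own-colour neighbour at one agent can be fully compensated by a gain at another agent elsewhere. Your closing claim that an invalid placement ``costs $\Theta(1/d)$ in welfare, which cannot be recovered elsewhere'' is therefore unjustified and in general false; a reduction from \textsc{Clique} cannot go through this way, since e.g.\ on a path the monochromatic-edge count is maximised by a prefix/suffix split that has nothing to do with cliques. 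The paper instead reduces from \textsc{Equal-Min-Cut} (min cut with $|X_1|=|X_2|$ and prescribed terminals $s,t$), precisely because that is the problem the welfare function natively encodes; it sets $\xi=\frac{2}{d_0}(|Y|-W)$ and proves the exact identity $\SW(\vv)=\frac{2}{d_0}\bigl(|Y|-|\{vz\in Y:\chi_v(\vv)\neq\chi_z(\vv)\}|\bigr)$, after which both directions are immediate.

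A second problem is your reliance on ``filler blocks'' whose utilities are tuned to be better or worse than those of well- or badly-placed free agents. That is incentive reasoning: it constrains equilibria (as in the preceding theorem) but places no constraint whatsoever on a welfare-maximising assignment, which is chosen globally and need not respect any agent's preferences. You flag this as your main obstacle but do not resolve it. The paper sidesteps it entirely by making every node outside the selection region \emph{stubborn}: $s,s_0$ carry stubborn red agents, $t,t_0$ stubborn blue agents, and all padding nodes carry stubborn \emph{green} agents (this is the only place $k\ge 3$ is used --- green padding normalises degrees without ever contributing to a red or blue agent's friend count). Since the strategic agents number exactly $|X|-2$ and every other node is pre-occupied, the only remaining freedom is the red/blue labelling of $X\setminus\{s,t\}$, so there are no ``cheating'' assignments to exclude. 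Relatedly, your stated intention to build a game ``all of whose agents are strategic'' would forfeit exactly this control and make the accounting you worry about genuinely intractable.
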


\begin{proof}
Membership in NP is immediate: given an assignment, we can sum up the utilities of the 
strategic agents and check whether the social welfare is at least $\xi$.
To prove NP-hardness, we give a reduction from an NP-complete variant of 
the min-cut problem with additional cardinality constraints on the size of the subsets, 
to which we refer as the {\sc Equal-Min-Cut} problem~\cite{G74}.
An instance of {\sc Equal-Min-Cut} consists of a graph $H=(X,Y)$, 
two distinguished nodes $s,t \in X$, and an integer $W$. 
It is a yes-instance if and only if there exist disjoint subsets 
of nodes $X_1$ and $X_2$ such that $X_1 \cup X_2 = X$, $|X_1| = |X_2|$, 
$s\in X_1$, $t \in X_2$ and $|\{ \{v,z\} \in Y: v \in X_1, z \in X_2|\}\leq W$. 
Without loss of generality, we assume that $|X|$ is an even number, and by convention we denote an edge $\{v,z\}$ as $vz$ to simplify our notation. 

Given an instance $\langle H, s, t, W\rangle$ of {\sc Equal-Min-Cut} with $H=(X,Y)$, 
we construct an instance of our social welfare maximization problem as follows:
\begin{itemize}
\item
There are $|X|/2-1$ strategic red and $|X|/2-1$ strategic blue agents.
\item
The topology $G=(V,E)$ consists of $H$ with additional nodes and edges.
Let $s_0$ and $t_0$ be two auxiliary nodes, and define 
$X_0 = \{s,s_0,t,t_0\}$ and 
$Y_0 = Y \cup \{ s_0v: sv \in Y \} \cup \{ t_0v: tv \in Y\}$. 
Let $d_v = |e \in Y_0: v \in e|$ for every $v \in X \setminus X_0$, and $d_0 = \max_{v\in X \setminus X_0} d_v$.
Also, let $Z_v$ be a set of $d_0-d_v$ nodes for every $v \in X\setminus X_0$. 
Then, $G$ is such that 
$V = X \cup \{s_0,t_0\} \bigcup_{v \in  X_0} Z_v$ and 
$E = Y_0 \cup \{ vz: v \in  X\setminus X_0, z \in Z_v \}$.
Observe that in $G$, every node $v \in X \setminus X_0$ has degree exactly equal to $d_0$. 
\item 
The nodes $s$ and $s_0$ are occupied by stubborn red agents, 
$t$ and $t_0$ are occupied by stubborn blue agents, 
and all nodes in $\bigcup_{v \in X \setminus X_0} Z_v$ are occupied by stubborn green agents.
\item 
Finally, let $\xi = \frac{2}{d_0} ( |Y| - W)$.
\end{itemize}

For any assignment $\vv$ and node $v \in V$, let $\chi_v(\vv)$ denote the type of the agent 
occupying node $v$.
We will show that the social welfare of $\vv$ is decreasing in the number of edges of $Y$ 
that are occupied by agents of different types. We have
\begin{align*}
\SW(\vv) &= \sum_{v \in X \setminus X_0} \frac{|\{ vz \in Y_0: \chi_v(\vv) = \chi_z(\vv)\}|}{d_0}\\
&= \frac{1}{d_0} \sum_{v \in X \setminus X_0} {|\{ vz \in Y_0: \chi_v(\vv) = \chi_z(\vv), 
z \notin X_0\}|} \\ 
&\quad + \frac{1}{d_0}  \sum_{v \in X \setminus X_0} {|\{vz \in Y_0: \chi_v(\vv) = \chi_z(\vv), 
z \in X_0\}|}.
\end{align*}
Since $\chi_s(\vv) = \chi_{s_0}(\vv)$, $\chi_t(\vv) = \chi_{t_0}(\vv)$, 
$vs \in Y \Leftrightarrow vs_0 \in Y_0$, 
and $vt \in Y \Leftrightarrow vt_0 \in Y'$, 
it follows that for every $v \in X \setminus X_0$, 
\begin{align*}
|\{vz \in Y_0: \chi_v(\vv) = \chi_z(\vv), z \in X_0\}|
&= 2 |\{vz \in Y_0: \chi_v(\vv)\ = \chi_z(\vv), z \in \{s,t\}\}| \\
&= 2 |\{vw \in Y:  \chi_v(\vv)\ = \chi_z(\vv), z \in \{s,t\}\}|.
\end{align*}
Furthermore, 
\begin{align*}
|\{vz \in Y_0: \chi_v(\vv) = \chi_z(\vv), z \not\in X_0\}| = |\{vw \in Y_0: \chi_v(\vv) = \chi_z(\vv), z \not\in \{s,t\}\}|
\end{align*}
Therefore, we obtain
\begin{align} \nonumber
\SW(\vv) &= \frac{2}{d_0} |\{vz \in Y: \chi_v(\vv) = \chi_z(\vv), 
z \in X \setminus \{s,t\}\}| \nonumber \\
&\quad +\frac{2}{d_0} |\{ vz \in Y: \chi_v(\vv)\ = \chi_z(\vv), z \in \{s,t\}\}| \nonumber \\
&= \frac{2}{d_0} |\{ vz \in Y: \chi_v(\vv) = \chi_z(\vv)\}| \nonumber \\
&= \frac{2}{d_0} \Big( |Y| - |\{ vz \in Y: \chi_v(\vv) \neq \chi_z(\vv)\}| \Big). \label{eq:SW-equal-min-cut}
\end{align}

Now, assume that the input instance of {\sc Equal-Min-Cut} is a yes-instance, 
and let $X = X_1 \cup X_2$ be the satisfying partition. 
Let $\vv$ be such that the strategic red agents occupy the nodes of $X_1$ and 
the strategic blue agents occupy the nodes of $X_2$. Then, by definition we have that 
$|\{ vz \in Y: \chi_v(\vv) \neq \chi_z(\vv)\}| = 
|\{ vz \in Y: v \in X_1, z \in X_2\}| \leq W$, and by~\eqref{eq:SW-equal-min-cut} we obtain
\begin{align*}
\SW(\vv) \geq \frac{2}{d_0} (|Y| - W ) = \xi.
\end{align*}

Conversely, assume that there exists an assignment $\vv$ with 
$\SW(\vv) \geq \xi = \frac{2}{d_0} ( |Y| - W )$.
Let $X_1$ consist of the nodes occupied by strategic red agents, and 
let $X_2$ consist of the nodes occupied by strategic blue agents. 
Then, $X_1 \cap X_2 = \varnothing$, and since there is an equal number 
of strategic red and blue agents, we also have that $|X_1| = |X_2|$. 
By~\eqref{eq:SW-equal-min-cut}, it is 
$$
|\{ vz \in Y: \chi_v(\vv) \neq \chi_z(\vv)\}| \le W,$$ 
and consequently
\begin{align*}
|\{ vz \in Y: \chi_v(\vv) \neq \chi_z(\vv)\}| = |\{ vz \in Y: v \in X_1, z \in X_2\}| \leq W,
\end{align*}
as desired.
\end{proof}


\section{Degree of Integration}
We now investigate whether equilibrium assignments can be diverse, by bounding the price of 
anarchy and stability in terms of the degree of integration; recall that this benchmark 
counts the number of agents who are exposed, i.e., have at least one neighbor of a different
type. As in the previous section, we again focus on games with strategic agents only.

We start by showing that the integration price of anarchy of $k$-swap games is $n/k$, i.e., 
it scales linearly with the number of agents. This indicates that 
in the worst case agents of different types 
are highly segregated, but, as the number of types increases,  
equilibria become more diverse and the price of anarchy decreases.

\begin{figure}[t]
\centering
\includegraphics[scale=0.4]{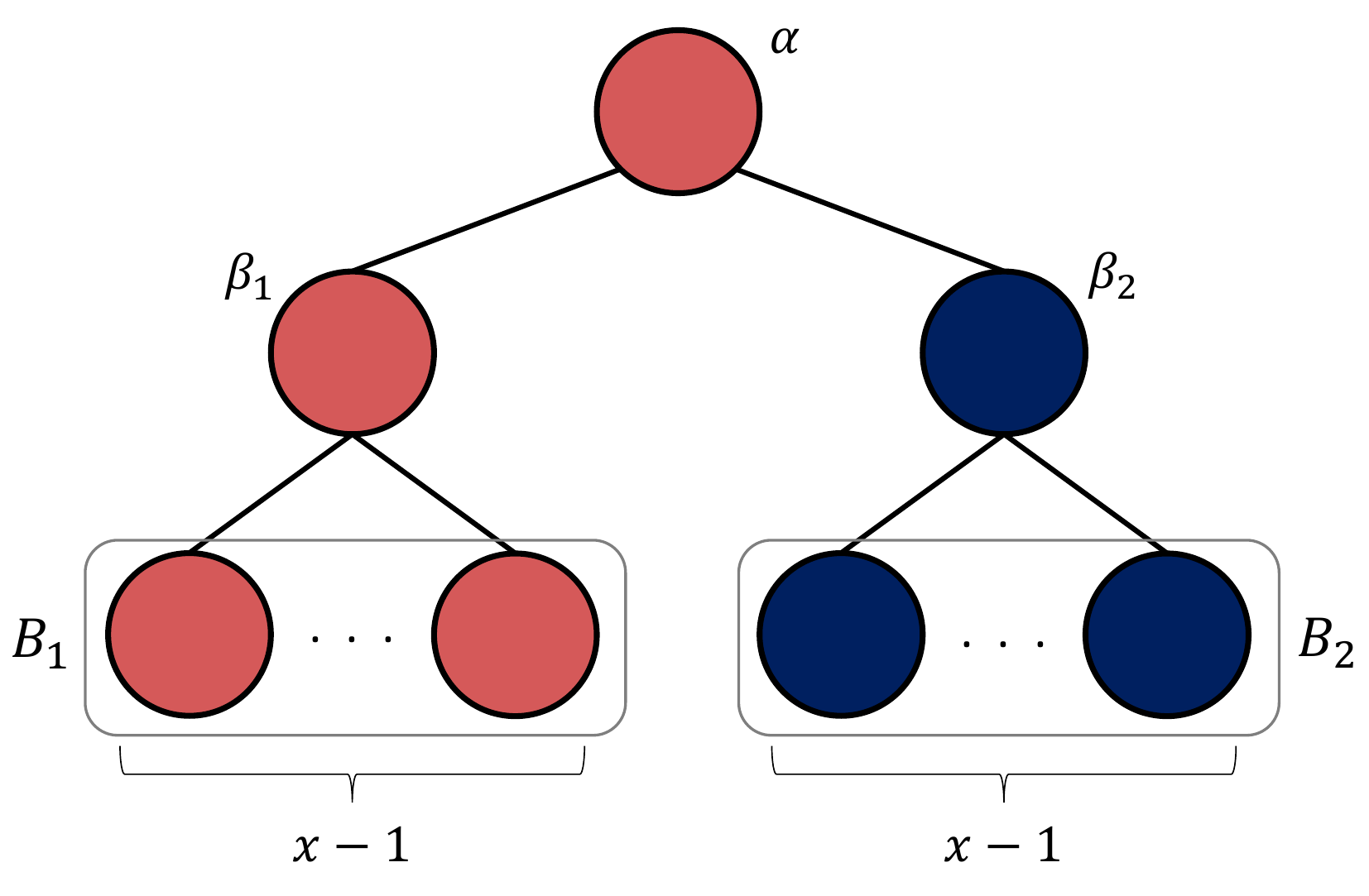}
\caption{The topology and the only possible equilibrium assignment 
for the $2$-swap game considered in the proofs of 
Theorems~\ref{thm:integration:poa} and~\ref{thm:integration:pos}. 
For $k$-swap games (Theorem~\ref{thm:integration:poa}) there
are $k$ identical subtrees, 
and in the worst equilibrium each subtree is filled by agents of a different type.}
\label{fig:integration}
\end{figure}

\begin{theorem}\label{thm:integration:poa}
For any $k \geq 2$, the integration price of anarchy of $k$-swap games with strategic 
agents only is at most $n/k$, and this bound is tight.
\end{theorem}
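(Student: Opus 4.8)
The plan is to establish the two directions separately: first an upper bound of $n/k$ on $\PoA_\DI$ for every $k$-swap game with only strategic agents, then a matching lower bound via an explicit construction.

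For the upper bound, I would argue that every equilibrium $\vv$ satisfies $\DI(\vv) \geq k$, while trivially $\DI(\vv^*_{\DI}(\calG)) \leq n$; dividing gives the claim. To see $\DI(\vv) \geq k$, suppose toward a contradiction that at most $k-1$ agents are exposed in $\vv$. Partition the agents (equivalently, the nodes, since $|V|=n$) into the monochromatic connected components induced by $\vv$: within a component all agents share a color, and between components the colors differ. Since the topology $G$ is connected, whenever two such components are adjacent in $G$ there is at least one edge crossing them, and both endpoints of that edge are exposed. With fewer than $k$ exposed agents, the number of such ``boundary'' nodes is small; the key combinatorial point is that if no agent were exposed at all, the graph would be disconnected (as there are at least two types, hence at least two components), contradicting connectivity. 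More carefully, I would show that each of the $k$ types must contribute at least one exposed agent: fix a type $T_\ell$; the nodes occupied by $T_\ell$-agents cannot form a union of connected components of $G$ in isolation from the rest (again by connectivity of $G$ together with the existence of agents of other types), so some $T_\ell$-agent has a neighbor of a different type and is therefore exposed. This yields $k$ distinct exposed agents (one per type), giving $\DI(\vv)\geq k$. Note this argument does not even use the equilibrium condition — it holds for \emph{every} assignment — which is consistent with the theorem bounding the price of anarchy rather than something stronger.

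For the lower bound, I would use the construction indicated in Figure~\ref{fig:integration}: a topology consisting of $k$ identical subtrees joined at a common structure, designed so that in the optimal assignment essentially all $n$ agents are exposed, while there is an equilibrium in which each subtree is filled entirely by agents of a single distinct type, so that only the $O(k)$ agents near the junction are exposed. The subtrees should be chosen (as in the $k=2$ case of the earlier non-existence proof, reused here) so that the ``monochromatic subtree'' assignment is genuinely an equilibrium: an agent deep inside a monochromatic subtree already has utility $1$ and will not swap, and the few agents near the root, having their best achievable utility already, also have no profitable swap. One then checks $\DI(\vv) = \Theta(k)$ for this equilibrium versus $\DI(\vv^*_{\DI}) = n - o(n)$, and by scaling the size of the subtrees the ratio tends to $n/k$. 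The main obstacle is verifying that the segregated assignment is precisely an equilibrium for all $k$ simultaneously — one must rule out every cross-type swap, which requires the subtree gadget to be rigid enough that no agent near the junction can improve; getting the gadget parameters right so that this holds uniformly in $k$ (and so that the optimum really does expose all but a vanishing fraction of agents) is the delicate part of the construction.

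Finally, I would remark that, since the upper-bound argument shows $\DI(\vv)\geq k$ for \emph{all} assignments, the same bound $n/k$ applies verbatim to the integration price of \emph{stability}, though of course the lower-bound instance for stability must be strengthened so that \emph{every} equilibrium — not merely one of them — is highly segregated; that refinement is what Theorem~\ref{thm:integration:pos} will address.
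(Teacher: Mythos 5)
Your proposal matches the paper's proof essentially exactly: the upper bound is the same connectivity argument showing that \emph{every} assignment has at least one exposed agent per type (so $\DI(\vv)\ge k$ against an optimum of at most $n$), and the lower bound is the same star-of-stars tree from Figure~\ref{fig:integration} with each subtree filled monochromatically. The only difference is that the paper pins the construction down concretely ($x+1$ agents of type $T_1$, $x$ agents of each other type, a root with $k$ children each having $x-1$ leaf children), which makes the equilibrium verification routine and gives $\DI(\vv)=k$ exactly at the segregated equilibrium against an optimum of exactly $n$, so the ratio is exactly $n/k$ rather than the $\Theta(n/k)$ your looser $\Theta(k)$ versus $n-o(n)$ accounting would yield.
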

\begin{proof}
For the upper bound, consider a $k$-swap game with $n$ agents. 
By definition, the optimal degree of integration is at most $n$. 
Since the topology is a connected graph, in any assignment $\vv$ 
at least one agent per type must be exposed. 
Hence, $\DI(\vv) \geq k$, and the integration price of anarchy is at most $n/k$.

For the lower bound, consider a $k$-swap game with $n=kx+1$ agents such that there are 
$x+1$ agents of type $T_1$ and $x$ agents of type $T_\ell$ for every $\ell \in 
[k]\setminus\{1\}$. The topology is a tree with root node $\alpha$ that has $k$ children 
nodes $\beta_1, \dots, \beta_k$, each of which has $x-1$ children leaf nodes of its own; 
see Figure~\ref{fig:integration} for an example of this topology for $k=2$.

One can assign the agents to the nodes of the topology so that each agent 
is exposed; thus the maximum possible 
degree of integration is $n$. However, there is an equilibrium assignment $\vv$ in which 
$\alpha$ is occupied by an agent of type $T_1$ and for each $\ell\in [k]$
all nodes of the $\beta_\ell$-subtree 
are occupied by agents of type $T_\ell$. 
In $\vv$ only the agent in $\alpha$ and the agents in nodes $\beta_\ell$, 
$\ell \in [k]\setminus\{1\}$, are exposed, yielding degree of integration 
$\DI(\vv) = k$, and the bound follows.
\end{proof}

Next, we consider the integration price of stability. Using the same instance as 
in the proof of Theorem~\ref{thm:integration:poa}, we show a lower bound that 
depends linearly on the number of agents for the fundamental case of two agent types. This 
bound is tight by the previous theorem, and indicates that we cannot avoid ending up with 
equilibrium assignments in which the types are highly segregated, even in the best case.

\begin{theorem}\label{thm:integration:pos}
The integration price of stability of $2$-swap games with strategic agents only is at least 
$n/2$.
\end{theorem}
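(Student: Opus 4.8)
The plan is to reuse the instance built in the proof of Theorem~\ref{thm:integration:poa} with $k=2$: the tree with root $\alpha$ whose two children $\beta_1,\beta_2$ each have $x-1$ leaf children, populated by $x+1$ red and $x$ blue agents, where $x\ge 3$ is a free parameter, so $n=2x+1$. As observed there, the agents can be placed so that every agent is exposed, hence the optimal degree of integration is $n$. It therefore suffices to show that \emph{every} equilibrium $\vv$ of this game has $\DI(\vv)=2$: the ratio between the optimum and the best equilibrium is then $\frac{2x+1}{2}=\frac n2$, and since $x$ is arbitrary this gives $\PoS_\DI\ge n/2$ (which, by the previous theorem, is in fact tight).

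The substance is to characterize the equilibria; I claim that, up to exchanging $\beta_1$ and $\beta_2$, the only equilibrium is the assignment of Theorem~\ref{thm:integration:poa} --- $\alpha$ and one whole $\beta_i$-subtree red, the other $\beta_j$-subtree blue --- which has $\DI=2$ because only $\alpha$ and $\beta_j$ have a neighbour of the other colour. The lever is that a leaf whose parent has the opposite colour has utility $0$ and so will swap with any agent who also strictly gains, and the vertex counts ($x+1$ reds, $x$ blues, $x-1$ leaves per subtree) then pin everything down. First, if $\beta_1$ and $\beta_2$ have the same colour $c$, then (since at least $x-1\ge 2$ leaves have colour $\bar c$) some $\bar c$-leaf $\gamma$ sits below, say, $\beta_1$ with utility $0$, and swapping $\gamma$ with the agent at $\beta_2$ strictly helps both --- at $\beta_2$ agent $\gamma$ acquires a positive fraction of $\bar c$-neighbours (the $\bar c$-leaves below $\beta_2$, together with $\alpha$ when $\alpha$ has colour $\bar c$), while the displaced colour-$c$ agent moves onto a leaf below the colour-$c$ node $\beta_1$ and reaches utility $1$ --- so this is not an equilibrium. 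Hence $\beta_1$ and $\beta_2$ differ in colour. If $\alpha$ were blue, let $\beta_b$ be the blue child: were some red leaf $\gamma$ below $\beta_b$, then $\gamma$ and $\alpha$ would swap ($\gamma$ reaches $\alpha$'s mixed neighbourhood with utility $1/2>0$, and $\alpha$'s blue agent moves onto a leaf below blue $\beta_b$ with utility $1>1/2$), so $\beta_b$'s subtree would be entirely blue, forcing $x+1$ blue agents --- impossible; hence in any equilibrium $\alpha$ is red. The symmetric swap then forces the red child's subtree to be entirely red, which together with $\alpha$ uses up all $x+1$ red agents and leaves the blue child's subtree entirely blue --- exactly the claimed assignment. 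Finally one checks directly that this assignment is an equilibrium: every agent other than $\alpha$ (utility $1/2$) and the blue child $\beta_j$ (utility $(x-1)/x$) already has utility $1$, and neither $\alpha$ nor $\beta_j$ has a profitable swap --- in particular, $\beta_j$ swapping with one of its own blue leaves leaves $\beta_j$'s utility unchanged, so no swap happens since the rule requires a strict gain for both parties.

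The main obstacle is the bookkeeping inside the case analysis: for each configuration one must produce a profitable swap and, crucially, check that the \emph{displaced} agent also strictly gains --- and its gain depends on where it lands (a leaf below a matching parent giving utility $1$, the node $\alpha$ giving utility $1/2$, or a $\beta$-node), which varies with the colours of $\alpha,\beta_1,\beta_2$ --- while the leaf counts must be invoked to guarantee that a leaf of the needed colour exists. The other delicate point is the direct verification that the claimed partition is genuinely an equilibrium, since one must rule out both the (non-)swap of the blue $\beta_j$ with its blue leaves and every cross-colour swap involving $\alpha$.
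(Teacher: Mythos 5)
Your proposal is correct and follows essentially the same route as the paper: it reuses the instance from Theorem~\ref{thm:integration:poa} with $k=2$ and characterizes the unique equilibrium (up to symmetry) as the fully segregated assignment with $\DI(\vv)=2$ via the same leaf-counting and swap arguments, merely organizing the case analysis in a slightly different order (first ruling out $\beta_1,\beta_2$ of equal colour, then forcing $\alpha$ to be red). The added explicit verification that the claimed assignment is indeed an equilibrium is a welcome touch the paper leaves implicit.
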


\begin{proof}
Consider a $2$-swap game with $x+1$ red agents and $x$ 
blue agents, for a total of $n=2x+1$ agents. 
The topology is the same as in Theorem 8: a tree 
consisting of a root node $\alpha$ with two children nodes $\beta_1$ and $\beta_2$, each of 
which has $x-1$ children leaf nodes of its own (sets $B_1$ and $B_2$); see 
Figure~\ref{fig:integration}. The optimal degree of integration is $n$.
We will now argue that the unique equilibrium assignment $\vv$ (up to symmetries) is such 
that $\alpha$ and all nodes of the $\beta_1$-subtree are occupied by red agents, while the 
nodes of the $\beta_2$-subtree are occupied by blue agents. 
The degree of integration of $\vv$ is exactly $2$, so the theorem follows.

Assume that agent $\pi_\alpha(\vv)$ is blue rather than red. We distinguish 
the following cases with regards to the agents occupying nodes $\beta_1$ and $\beta_2$.
\begin{itemize}
\item
Both $\pi_{\beta_1}(\vv)$ and $\pi_{\beta_2}(\vv)$ are of the same type. 
Assume that both of these agents are blue; as there are $x+1$ red agents, 
there must be red agents at the leaf nodes of both $B_1$ and $B_2$. 
But then agent $\pi_{\beta_2}(\vv)$ and some red agent occupying a node of $B_1$ 
can swap to increase their utility 
from strictly less than $1$ and zero to $1$ and positive, respectively. 
Hence, it must be the case that $\pi_{\beta_1}(\vv)$ and 
$\pi_{\beta_2}(\vv)$ are both red. Again, if there are blue agents
at the leaf nodes of both $B_1$ and $B_2$, the assignment is not an equilibrium, 
so the nodes of one of these subtrees (say, $B_1$) are all occupied by red agents, 
and the nodes of the other subtree are all occupied by blue agents. However, such an assignment 
cannot be an equilibrium since the blue agent $\pi_\alpha(\vv)$ 
and the red agent $\pi_{\beta_2}(\vv)$ both get zero utility and have an incentive to swap.
\item
$\pi_{\beta_1}(\vv)$ is red and $\pi_{\beta_2}(\vv)$ is blue. 
Since there are $x$ red agents remaining, at least one of them must be in $B_2$.
But then she can swap positions 
with the blue agent $\pi_\alpha(\vv)$ so that they increase their utility 
from zero and $1/2$ to $1/2$ and $1$, respectively. 
\end{itemize}
Therefore, agent $\pi_\alpha(\vv)$ must be red. 
Similarly to the previous case, we observe that 
if $\pi_{\beta_1}(\vv)$ and $\pi_{\beta_2}(\vv)$ are both blue, 
there must be red agents in both $B_1$ and $B_2$, and
if $\pi_{\beta_1}(\vv)$ and $\pi_{\beta_2}(\vv)$ are both red, 
there must be blue agents in both $B_1$ and $B_2$,
which means that $\pi_{\beta_1}(\vv)$ and some agent in 
a node of $B_2$ would have an incentive to swap.
Thus, one of $\pi_{\beta_1}(\vv)$ and $\pi_{\beta_2}(\vv)$ 
(say, $\pi_{\beta_1}(\vv)$) must be red and the other one must 
be blue. Then, if there is a blue agent in $B_1$, by a counting argument 
there is also a red agent in $B_2$, and these two agents would have an incentive to swap. 
Hence, $\vv$ is the only equilibrium assignment.
\end{proof}

To develop better intuition for the integration price of anarchy and stability, we also consider
the special case where the topology is a line. In this case, while the integration price 
of anarchy remains linear in $n/k$, the integration price of stability
can be bounded by a small constant.

\begin{theorem}\label{thm:integration:line}
Consider a $k$-swap game with strategic agents only, at least two agents per type, 
and a line topology.
The integration price of anarchy is at most $\frac{n}{2k-2}$, 
while the integration price of stability is at most $\frac94$. 
Moreover, if the number of agents of each type grows with $n$, 
the integration price of stability is at most $\frac32+o(1)$.
All these bounds are tight.
\end{theorem}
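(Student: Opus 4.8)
\emph{Two facts about a line topology underlie everything.} \emph{Counting identity:} if an assignment $\vv$ cuts the path on $n$ vertices into $r$ maximal monochromatic runs, then $\DI(\vv)=2r-2-I_1$, where $I_1$ is the number of runs of length $1$ not containing an endpoint of the path; indeed, each of the $r-1$ boundaries between consecutive runs exposes exactly the two agents straddling it, and summing $2(r-1)$ over the boundaries counts every exposed agent once, except the $I_1$ agents that form interior length-$1$ runs, which are counted twice. \emph{Stability criterion:} every assignment on a line all of whose runs have length $\ge 2$ is an equilibrium --- then each agent has utility $1$ (interior of her run, or a path-endpoint) or exactly $\tfrac12$; an agent of utility $1$ never deviates, and an agent of utility $\tfrac12$ could only profit by reaching utility $1$, which needs a position all of whose post-swap neighbours share her type. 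On a line such a position lies strictly inside a length-$\ge 3$ run of her type, or at a path-endpoint next to a length-$\ge 2$ run of her type, so it is occupied by a same-type agent, and same-type swaps are never performed.

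\emph{Integration price of anarchy.} I would show $\DI(\vv)\ge 2k-2$ for \emph{every} assignment on a line. Writing $r_i$ for the number of runs of type $T_i$ and $I_1^i$ for the number of its interior length-$1$ runs, the identity gives $\DI(\vv)=\sum_i(2r_i-I_1^i)-2$, so it suffices to prove $2r_i-I_1^i\ge 2$ for each $i$; this follows from $r_i\ge I_1^i$ and $r_i\ge 1$ together with the observation that $I_1^i\ge 1$ forces $r_i\ge 2$, because each type has at least two agents. Since the optimum $\DI^*\le n$, this gives $\PoA_\DI\le n/(2k-2)$. For tightness take $n=kx$ with $x\ge 2$ agents per type: by the stability criterion $T_1^xT_2^x\cdots T_k^x$ is an equilibrium with $\DI=2k-2$, while the agents can be laid out so that consecutive agents always differ (possible since no type exceeds $\lceil n/2\rceil$), so $\DI^*=n$.

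\emph{Integration price of stability (upper bounds).} Given type sizes $n_1\ge\cdots\ge n_k\ge 2$, split each $n_i$ into the maximum number $\lfloor n_i/2\rfloor$ of parts of size $\ge 2$ (all of size $2$, plus one of size $3$ if $n_i$ is odd). If this multiset of runs can be ordered along the path with no two consecutive runs of equal type --- i.e.\ $\lfloor n_1/2\rfloor\le\lceil R/2\rceil$ for $R=\sum_i\lfloor n_i/2\rfloor$ --- do so; by the stability criterion it is an equilibrium, and since $2\lfloor n_i/2\rfloor\ge\tfrac23 n_i$ we get $\DI=2R-2\ge\tfrac23 n-2$, which against $\DI^*\le n$ gives $\PoS_\DI\le 9/4$ for $n\ge 9$ (the few smaller games are checked directly). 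Otherwise $\lfloor n_1/2\rfloor>M:=\sum_{i\ge2}\lfloor n_i/2\rfloor$, hence $n_1\ge 2(M+1)$; re-split $T_1$ into exactly $M+1$ parts of size $\ge 2$, keep the maximal splits of the other types, and place the $M+1$ runs of $T_1$ alternately with the $M$ remaining runs, so no two consecutive runs share a type. This is again an equilibrium, now with $r=2M+1$ runs, so $\DI=4M\ge\tfrac43\sum_{i\ge2}n_i$; since $\DI^*\le\min\{n,\,3(n-n_1)\}$, a short check gives $\PoS_\DI\le 9/4$ both when $n_1\ge\tfrac23 n$ and when $n_1<\tfrac23 n$. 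Finally, if every $n_i$ grows with $n$ then $k=o(n)$ and $\lfloor n_i/2\rfloor=(\tfrac12-o(1))n_i$, so in the first case $\DI=(1-o(1))n$ and in the second $\DI=(\tfrac23-o(1))\cdot 3(n-n_1)$, sharpening both estimates to $\tfrac32+o(1)$.

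\emph{The hard part: tightness of the price-of-stability bounds.} This asks one to show that certain games admit \emph{no} equilibrium beating the target. For $9/4$ I would take $n_1$ large and $n_2=\cdots=n_k=3$: then $\DI^*=9(k-1)$ (place each non-$T_1$ agent alone in the interior between two distinct $T_1$-neighbours, matching $\DI^*\le 3(n-n_1)$), and the construction above already gives an equilibrium with $\DI=4(k-1)$. The crux is that no equilibrium is better: by the identity $\DI>4(k-1)$ forces at least $2k$ runs, hence at least $k$ non-$T_1$ runs (runs of $T_1$ must be separated by non-$T_1$ runs), which forces some size-$3$ type to be split into length-$1$ runs; one must then show, by a case analysis on where the resulting utility-$0$ agents lie and which agents border the remaining pieces of that type, that some utility-$0$ agent always has a partner willing to swap her back toward her other agents --- for instance a lone agent flanked on both sides by $T_1$ can always swap with a $T_1$ bordering the rest of its type --- so that any \emph{stable} placement of such pieces still leaves at most $4(k-1)$ exposed agents. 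The $\tfrac32$-tightness is analogous with $n_1=n-g(n)$, $n_2=g(n)$ for a slowly growing $g$ ($g(n)\to\infty$, $g(n)=o(n)$): there $\DI^*=3g(n)$, while the same analysis caps every equilibrium at $(2+o(1))g(n)$. This "no better equilibrium" argument is the main obstacle; once it and the two facts above are in hand, the remainder is bookkeeping.
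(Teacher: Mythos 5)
Your upper-bound arguments follow essentially the same route as the paper's: partition each type into blocks of size $2$ and $3$, observe that any contiguous placement of such blocks is an equilibrium, order the blocks so that consecutive blocks have different types, and treat a dominant type separately. Within that shared skeleton you add genuinely useful structure: the counting identity $\DI(\vv)=2r-2-I_1$, the general ``all runs of length $\ge 2$'' stability criterion, the re-splitting of the dominant type into exactly $M+1$ parts (which guarantees every minority block is sandwiched between majority blocks, something the paper's greedy procedure does not actually ensure), and the bound $\DI^*\le 3(n-n_1)$, which makes the $9/4$ computation cleaner than the paper's $\frac{3(2\lambda+k-1)}{4\lambda}$ with $\lambda\ge k-1$. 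The price-of-anarchy part is complete and matches the paper, with your run-counting argument replacing the paper's leftmost/rightmost-agent argument to the same effect.

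There are two gaps. First, ``the few smaller games are checked directly'' is not a formality. For $3$ red and $2$ blue agents the unique equilibrium up to symmetry is RRRBB, with $\DI=2$ against an optimum of $5$ (RBRBR), giving ratio $5/2$; for $3$ red and $3$ blue the unique equilibrium is RRRBBB and the ratio is $3$. Both exceed $9/4$, so the deferred check cannot succeed: your first case genuinely breaks below $n=9$. This is a defect of the theorem as stated rather than something you introduced --- the paper's own proof asserts without justification that $\frac{n}{n-k-2}\le\frac94$ in the non-dominant case (it equals $3$ for $3$ red and $3$ blue), and its claim that alternation exposes $4\lambda$ agents fails when the last minority block lands at an endpoint of the line --- but a complete write-up has to confront these instances rather than wave at them.

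Second, and more seriously for your plan, the tightness of the two price-of-stability bounds is left as a sketch exactly where the work lies. Since $\PoS_\DI$ is a supremum over games, a single instance suffices, and the paper uses much simpler ones: $6$ red and $3$ blue for $9/4$, and $4s$ red and $2s$ blue for $\frac32$. In both, the entire argument is that an isolated minority agent $b$ always has a non-adjacent majority partner $r$ with another minority neighbour, so $b$ and $r$ profitably swap; hence in every equilibrium the minority agents form blocks of size at least $2$ (a single block of $3$ in the first instance), which caps $\DI$ at $4$ (respectively $4s$) against an optimum of $9$ (respectively $6s$). Your family with $k-1$ extra types of size $3$ forces a much larger case analysis (``some utility-$0$ agent always has a partner willing to swap her back''), which you do not carry out and which has genuine exceptional configurations --- e.g.\ when the only candidate partner is adjacent to the lone agent the swap may not strictly improve both utilities. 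I would replace your family with the paper's two-type instances and prove the single swap lemma carefully; that turns the ``hard part'' into a short argument.
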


\begin{proof}
Let the topology be a line, with nodes $1, \dots, n$ connected in this order.

\medskip

\noindent
{\bf Price of anarchy.}
For the upper bound, consider an equilibrium assignment $\vv$. 
For each type $T_i$ let $\ell_i$ be the leftmost agent of type $T_i$ and let $r_i$ be the rightmost agent 
of this type. If $v_{\ell_i}\neq 1$, then $\ell_i$ has a neighbor to the left who
belongs to a different type; similarly, if $v_{r_i}\neq n$ then $r_i$ has a neighbor
to the right who belongs to a different type. Since nodes $1$ and $n$ are occupied
by exactly two agents, it follows that $\DI(\vv)\ge 2k-2$. 
As at most $n$ agents are exposed in any assignment, the bound follows.

To see that this bound is tight, it suffices to consider 
a $k$-swap game with $s$ agents per type, for some $s\ge 2$. We can create $s$
identical blocks of agents, with each block containing exactly one agent of each type, 
and place them on the line one after the other,  
so that every agent is exposed. 
However, there is also an equilibrium where agents are placed in monochromatic blocks 
of size $s$, so that only $2k-2$ agents are exposed.

\medskip

\noindent
{\bf Price of stability.}
We partition the agents of each type into blocks of size $2$
and $3$, with at most one block of size $3$ per type (that is, we create a block of size $3$
if and only if the number of agents of that type is odd); let $B_1, \dots, B_d$
be the resulting collection of blocks.
Observe that any assignment where agents
in each block are placed contiguously on the line is an equilibrium. 
Indeed, under any such assignment each agent has at least one neighbor of the same type, 
and no agent can move to a position where she would have two neighbors of her type;
she cannot move to nodes $1$ or $n$ either, since
agents at these nodes are unwilling to swap (they have utility $1$).

It remains to explain how to place these blocks on the line to maximize the degree of integration.
We do so greedily, from left to right. 
That is, we first pick some $i \in [k]$ such that $|T_i|\ge |T_j|$
for all $j\in [k]$, and place some block $B\subseteq T_i$ first;
if $|B|=2$, we assign the agents in $B$ to nodes $1$ and $2$, 
and if $|B|=3$, we assign the agents in $B$ to nodes $1$, $2$, and $3$.
Now, suppose that $\ell$ blocks have been placed, so that the last occupied node is node
$r$, and the agent there is of type $T_x$. 
For each $j\in [k]$, let $t_j$ be the number of agents of type $T_j$ 
who have not yet been placed. 
If $t_j=0$ for all $j\in [k]\setminus\{x\}$, we complete the assignment by simply 
placing all the remaining agents of type $T_x$ on the line.
Otherwise, we pick an
$i\in [k]\setminus\{x\}$ such that $t_i\ge t_j$ for all $j\in [k]\setminus\{x\}$, 
and place some block $B\subseteq T_i$ in positions $r+1, \dots, r+|B|$.

Let us say that a type $T_i$ is {\em dominant} if $|T_i|>n/2$.
An easy inductive argument shows that if no type is dominant, then 
under this assignment we never place two blocks of the same type next to each other;
the key observation is that if no type is dominant after $\ell$ blocks
have been placed, this remains to be the case after $\ell+2$ blocks have been placed,
and hence if we still have at least two blocks to place, 
we have at least two types to choose from.  
In this case, the only agents who are not exposed are agents at nodes 
$1$ and $n$ as well as agents located at the middle of a block of size $3$, i.e., at most $k+2$ 
agents. Thus, the integration price of stability is at most $\frac{n}{n-k-2}$ in this case. 
Now, suppose that some type (say, type $T_1$) is dominant. 
If there are $\lambda$ blocks of types $T_2, \dots, T_k$, 
then under our procedure we will first alternate between blocks of type $T_1$ and blocks
of other types, and then place the remaining blocks of type $T_1$ (if any).
Then at least $4\lambda$ agents will be exposed.
On the other hand, at most $k-1$ of these $\lambda$ blocks are of size $3$, so
we have at most $2\lambda+k-1$ agents of types $T_2, \dots, T_k$, and hence
in any assignment at most $2(2\lambda+k-1)$ agents of type $T_1$ can be exposed.
Thus, the integration price of stability in this case is at most 
$$\frac{3(2\lambda+k-1)}{4\lambda}=\frac{3}{2}+\frac{3k-3}{4\lambda}.$$ 
Since $\lambda \ge k-1$, 
this quantity is at most $\frac94$. Further, 
if we assume that the number of agents of each type grows with $n$, we have
$\frac{3k-3}{4\lambda}=o(1)$, and the bound becomes $\frac32+o(1)$.

To see that the bound $\frac94$ on the integration price of stability is tight, 
consider a game with six red agents
and three blue agents. In equilibrium, the three blue
agents need to form a single block: if there is an isolated blue agent $b$,
there is also a red agent $r$ who is not a neighbor of $b$, but has another blue
neighbor; $b$ and $r$ can then benefit from swapping. 
Thus, in equilibrium at most two blue agents---and hence at most two red agents---are exposed.
However, we can also create three blocks of agents, 
with each block consisting of a single blue agent surrounded by two red agents, 
and place these three blocks consecutively on the line, so that each agent 
is exposed.

To see that the bound $\frac32$ is tight if the number of agents of each type
grows with $n$, consider an instance with $4s$ red agents and $2s$ blue agents, 
for some $s\in{\mathbb N}$. Arguing as above, we can see that in equilibrium the blue
agents have to appear in blocks of size at least $2$, so that each blue agent
has at most one red neighbor. Hence, at most $2s$ red agents have a blue neighbor, 
and thus the number of exposed agents is at most $4s$. 
On the other hand, 
by placing agents in red-blue-red blocks, as described in the previous paragraph,
we can ensure that all $6s$ agents are exposed.  
\end{proof}

\noindent
Hence, for games with simple line topologies, integration can be achieved in equilibrium. 
However, when left to their own devices, the agents may end up in a very segregated configuration.

We conclude this section by studying the complexity of computing assignments (not necessarily 
equilibria) with high degree of integration. Unfortunately, it turns out that even this 
task is computationally intractable.

\begin{theorem}
Given a $k$-swap game, computing an assignment in which every agent 
is exposed is NP-complete, 
even if $k=2$ and all agents are strategic. 
\end{theorem}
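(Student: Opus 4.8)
The plan is to reduce from a graph coloring or partitioning problem where "every agent exposed" translates into a constraint that no monochromatic region is closed off from the rest of the graph. A natural choice is to reduce from a variant of graph coloring: specifically, asking whether a graph admits a \emph{proper} $2$-coloring (i.e., is bipartite) would be too easy, so instead I would look for a problem that remains hard. A promising candidate is to reduce from the NP-complete problem of deciding whether a graph can be partitioned into two sets of \emph{prescribed sizes} such that some adjacency condition holds --- for instance, a problem in the spirit of \textsc{Equal-Min-Cut} already used earlier in the paper, or more directly from a problem like \textsc{3-Coloring} restricted appropriately, or from \textsc{Monotone Not-All-Equal 3-SAT}. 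The key is that "agent $i$ is exposed" means node $v_i$ has a neighbor of a different color; requiring \emph{all} agents to be exposed is equivalent to requiring that the color classes, viewed as subsets of $V$, each have the property that every node in the class has at least one neighbor outside the class. Equivalently, neither color class contains a connected component of $G$ as a subset, and more strongly, no node is "surrounded" by its own color.

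The main construction I would use: given an instance of, say, \textsc{NAE-3-SAT} (or a partition problem), build a topology $G$ together with a specified multiset of colors (two colors, with prescribed counts $n_1$ and $n_2 = n - n_1$) such that a valid assignment making everyone exposed exists if and only if the original instance is satisfiable. Gadgets would enforce, via their local structure, that certain nodes must receive certain colors in any "fully exposed" assignment (for example, a node all of whose neighbors are forced to one color must itself take the other color to be exposed, or a pendant structure forces a variable-node's color), and the satisfiability condition is encoded by requiring that each clause gadget contains at least one node of each color. The prescribed color counts $n_1, n_2$ are the mechanism to prevent trivial solutions: without a cardinality constraint one could often just $2$-color everything cleverly, so the counts must be tuned so that the only way to hit them while keeping everyone exposed is to follow the intended correspondence.

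First I would fix the source problem and describe the gadgets: a variable gadget that admits exactly two "locally consistent fully-exposed" colorings corresponding to true/false, and a clause gadget that is fully exposable iff not all its literals are equal (for NAE-3-SAT) or iff the clause is satisfied (for 3-SAT, which would need a slightly richer gadget since $2$ colors must encode satisfaction). Then I would verify the forward direction: a satisfying assignment yields a coloring; check every gadget node has a differently-colored neighbor, and check the global color counts match $n_1, n_2$ (padding with filler nodes/paths of appropriate parity if needed). Then the reverse direction: from any assignment in which all $n$ agents are exposed, read off a truth assignment from the variable gadgets and argue it satisfies every clause, using the exposedness constraint at clause gadgets plus the cardinality constraint to rule out "cheating" colorings. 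Membership in NP is trivial (check each node for a differently-colored neighbor).

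The hard part will be designing gadgets whose \emph{only} fully-exposed colorings are the intended ones --- with just two colors and no empty nodes, local exposedness is a fairly weak constraint, so I expect to need long pendant paths or large cliques attached to key nodes to "pin down" colors (a node adjacent to a big clique that is forced monochromatic in one color is itself forced to the other color), combined with careful bookkeeping of the color counts $n_1$ and $n_2$ so the global cardinality constraint does the rest of the pinning. Balancing the gadget sizes so that the total counts are exactly achievable precisely in the satisfiable case, while every node --- including all the auxiliary clique/path nodes --- is simultaneously exposed, is the delicate engineering step; everything else (the two implications and NP membership) is routine once the gadgets are in place.
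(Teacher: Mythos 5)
There is a genuine gap: your proposal is a reduction \emph{strategy}, not a reduction. You fix neither the source problem (you float \textsc{NAE-3-SAT}, \textsc{3-SAT}, and unspecified partition problems as candidates) nor any gadget, and you explicitly defer the step you yourself identify as the crux --- designing variable and clause gadgets whose only fully-exposed colorings are the intended ones, while simultaneously keeping every auxiliary node exposed and hitting the prescribed color counts. Your own caveat is well-founded: with two colors, no empty nodes, and the requirement that \emph{all} nodes (including gadget internals) be exposed, local exposedness is a symmetric and weak condition, and encoding clause satisfaction (as opposed to a not-all-equal condition) in it is genuinely delicate. Until those gadgets are exhibited and both directions verified, there is no proof to check.

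For comparison, the paper avoids gadget engineering entirely by reducing from \textsc{Vertex Cover}. Given $H=(X,Y)$ and budget $\lambda$, the topology is $H$ with one extra node $s_e$ per edge $e=\{v,w\}$, adjacent to both $v$ and $w$; there are $\lambda$ blue agents and $|X|+|Y|-\lambda$ red agents. The point is that ``$s_e$ is exposed'' in an (essentially) all-red neighborhood directly means ``some endpoint of $e$ is blue,'' so the blue nodes form a vertex cover; the cardinality constraint $\lambda$ does the rest, exactly the mechanism you correctly anticipated. The only subtlety is the reverse direction, where blue agents may sit on $s_e$ nodes rather than on $X$; the paper handles this by charging each such blue $s_e$ to one endpoint of $e$, still obtaining a cover of size at most $\lambda$. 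If you want to salvage your approach, you would do better to switch to a source problem where exposedness of a degree-2 witness node natively expresses the constraint, rather than forcing colors with cliques and pendant paths --- each of which adds more nodes that must themselves be exposed, compounding the bookkeeping you flagged as the hard part.
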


\begin{proof}
The problem is clearly in NP; for a given assignment we can verify whether 
each of the $n$ agents has at least one neighbor of a different type in time $O(n^2)$. 
For the NP-hardness proof, we give a reduction from the {\sc Vertex Cover} problem, 
which is known to be NP-complete.
An instance of {\sc Vertex Cover} consists of an undirected graph $H = (X, Y)$ 
and an integer $\lambda \le |X|$. 
It is a yes-instance if there exists a set $X' \subseteq X$ such that $|X'| \le \lambda$ 
and $\{v,w\} \cap X' \neq \varnothing$ for every edge $\{v,w\} \in Y$. 
Such a set $X'$ is called a vertex cover of $H$.
Without loss of generality, we assume that $H$ has no isolated vertices.

Given an instance $\langle H, \lambda \rangle$ of the {\sc Vertex Cover} problem with 
$H=(X,Y)$, we construct a $2$-swap game as follows:
\begin{itemize}
\item
We have $|X|+|Y|-\lambda$ red agents and $\lambda$ blue agents,  
for a total of $n=|X|+|Y|$ agents.   
\item 
To construct the topology $G=(V,E)$, we start with the graph $H$. 
Then, for each edge $e = \{v,w\}\in Y$, we add a node $s_e$, 
and two edges connecting $s_e$ to $v$ and $w$. Let $Q = \{s_e: e \in Y\}$.
Then, $V = X \cup Q$, $X \cap Q = \varnothing$, and $|V| = |X|+|Q| = |X|+|Y| = n$.
\end{itemize}
We show that $H$ has a vertex cover of size at most $\lambda$ if and only if there exists 
an assignment in which every agent is exposed.

First, suppose that there exists a vertex cover $X'\subseteq X$ of $H$ of size at most $\lambda$;
by adding nodes, we can assume that $|X'|=\lambda$. 
Consider the assignment $\vv$ in which the nodes of $X'$ 
are occupied by blue agents, while all other nodes of $V \setminus X'$ 
are occupied by red agents. In this assignment, every agent is exposed:
\begin{itemize}
\item
For every blue agent $i$ occupying a node $v_i \in X'$, since $H$ has no isolated nodes, 
there must exist an edge $e \in Y$ such that $v_i \in e$, and hence $v_i$ is connected to 
$s_e$, which is occupied by a red agent.
\item
For every red agent $i$ occupying a node $v_i \in X \setminus X'$, since $X'$ is a vertex 
cover, $v_i$ must be connected to a node $z \in X'$, which is occupied by a blue agent.
\item 
For every red agent $i$ occupying a node $v_i=s_e\in Q$, since $X'$ is a vertex cover, at 
least one of $e$'s endpoints must be in $X'$, which is occupied by a blue agent, and  
$s_e$ is connected to both endpoints of $e$.
\end{itemize}

Conversely, suppose that $\vv$ is an assignment of the agents to the nodes of the topology 
such that every agent is exposed. 

For each edge $e = \{v, w\} \in E$, let $\ell(e)$ be an arbitrary element of $\{v, w\}$.
Let $X' = \{ v\in X: \pi_v(\vv) \text{ is blue}\}$,
$X_Q = \{z\in X: z=\ell(e)\text{ for some $e$ such that $\pi_{s_e}(\vv)$ is blue}\}$.
Since there are $\lambda-|X'|$ nodes in $Q$ that are occupied by blue agents, we have
$|X_Q|\le \lambda-|X'|$ and hence $|X'\cup X_Q|\le \lambda$.
We claim that $X'\cup X_Q$ is a vertex cover for $H$.
Indeed, consider an arbitrary edge $e = \{v, w\}\in E$; we will argue that 
$e\cap(X'\cup X_Q)\neq\varnothing$. If $v\in X'$ or $w\in X'$, 
we are done. Otherwise, both $v$ and $w$ are occupied by red agents;
since $\pi_{s_e}(\vv)$ is adjacent to an agent
of a different type, it follows that $s_e$ is occupied by a blue agent
and $\ell(e)\in X_Q$. Hence one of $v$ and $w$ is in $X_Q$.
This completes the proof.
\end{proof}

\section{Conclusions and Open Problems}
We have studied Schelling games on graphs in which pairs of agents 
can deviate by swapping their locations. We considered questions 
related to the existence and the efficiency 
of equilibrium assignments, both from a social welfare perspective and from
a diversity perspective.

While equilibria are known to exist in instances where the topology is highly 
structured, we showed that their existence is not guaranteed in general, 
and deciding whether a given swap game admits an equilibrium assignment
is NP-complete. Even though we have implicitly assumed that the tolerance 
threshold of every agent is $1$, and thus she is never truly happy unless she is 
connected to friends only, our proofs extend to other threshold values 
as well. For instance, one can verify that Theorem~\ref{thm:non-existence} for 
$k=2$ holds for any $\tau \in (2/3,1)$. A challenging open 
question is to completely characterize the topologies and the threshold values for which 
equilibria are guaranteed to exist, and also design efficient algorithms to compute equilibria
when they exist.

We have introduced a new index for measuring the diversity of a given assignment,
which we called the degree of integration. It would be interesting to explore
the tradeoffs between diversity and social welfare: can we compute (equilibrium) assignments
with a given degree of integration that maximize the social welfare? While our results
indicate that this problem is hard for general topologies, one could hope to obtain
approximate or parameterized algorithms, or focus on simple topologies. One can also 
investigate more ambitious diversity indices, e.g., by considering, for each agent, 
the number of other types she is exposed to. 

\bibliographystyle{named}
\bibliography{schelling-bib}

\begin{thebibliography}{}

\bibitem[\protect\citeauthoryear{Anshelevich \bgroup \em et al.\egroup
  }{2008}]{ADKTWR08}
Elliot Anshelevich, Anirban Dasgupta, Jon~M. Kleinberg, {\'E}va Tardos, Tom
  Wexler, and Tim Roughgarden.
\newblock The price of stability for network design with fair cost allocation.
\newblock {\em {SIAM} Journal on Computing}, 38(4):1602--1623, 2008.

\bibitem[\protect\citeauthoryear{Aziz \bgroup \em et al.\egroup
  }{2019}]{ABBHOP19}
Haris Aziz, Florian Brandl, Felix Brandt, Paul Harrenstein, Martin Olsen, and
  Dominik Peters.
\newblock Fractional hedonic games.
\newblock {\em {ACM} Transactions on Economics and Computation},
  7(2):6:1--6:29, 2019.

\bibitem[\protect\citeauthoryear{Barmpalias \bgroup \em et al.\egroup
  }{2014}]{BEL14}
George Barmpalias, Richard Elwes, and Andrew Lewis{-}Pye.
\newblock Digital morphogenesis via {S}chelling segregation.
\newblock In {\em Proceedings of the 55th {IEEE} Annual Symposium on
  Foundations of Computer Science ({FOCS})}, pages 156--165, 2014.

\bibitem[\protect\citeauthoryear{Barmpalias \bgroup \em et al.\egroup
  }{2015}]{BEL15}
George Barmpalias, Richard Elwes, and Andrew Lewis{-}Pye.
\newblock From randomness to order: unperturbed {S}chelling segregation in two
  or three dimensions.
\newblock {\em CoRR}, abs/1504.03809, 2015.

\bibitem[\protect\citeauthoryear{Bhakta \bgroup \em et al.\egroup
  }{2014}]{Bhakta2014clustering}
Prateek Bhakta, Sarah Miracle, and Dana Randall.
\newblock Clustering and mixing times for segregation models on $\mathbb{Z}^2$.
\newblock In {\em Proceedings of the 25th Annual {ACM-SIAM} Symposium on
  Discrete Algorithms ({SODA})}, pages 327--340, 2014.

\bibitem[\protect\citeauthoryear{Bogomolnaia and Jackson}{2002}]{BJ02}
Anna Bogomolnaia and Matthew~O. Jackson.
\newblock The stability of hedonic coalition structures.
\newblock {\em Games and Economic Behavior}, 38(2):201--230, 2002.

\bibitem[\protect\citeauthoryear{Brandt \bgroup \em et al.\egroup
  }{2012}]{BIKK12}
Christina Brandt, Nicole Immorlica, Gautam Kamath, and Robert Kleinberg.
\newblock An analysis of one-dimensional {S}chelling segregation.
\newblock In {\em Proceedings of the 44th Symposium on Theory of Computing
  Conference ({STOC})}, pages 789--804, 2012.

\bibitem[\protect\citeauthoryear{Bredereck \bgroup \em et al.\egroup
  }{2019}]{BEI19}
R.~Bredereck, E.~Elkind, and A.~Igarashi.
\newblock Hedonic diversity games.
\newblock In {\em Proceedings of the 18th International Conference on
  Autonomous Agents and Multi-Agent Systems ({AAMAS})}, pages 565--573, 2019.

\bibitem[\protect\citeauthoryear{Chauhan \bgroup \em et al.\egroup
  }{2018}]{CLM18}
Ankit Chauhan, Pascal Lenzner, and Louise Molitor.
\newblock {S}chelling segregation with strategic agents.
\newblock In {\em Proceedings of the 11th International Symposium on
  Algorithmic Game Theory ({SAGT})}, pages 137--149, 2018.

\bibitem[\protect\citeauthoryear{Clark and Fossett}{2008}]{CF08}
William Clark and Mark Fossett.
\newblock Understanding the social context of the {S}chelling segregation
  model.
\newblock {\em Proceedings of the National Academy of Sciences},
  105(11):4109--4114, 2008.

\bibitem[\protect\citeauthoryear{Dr{\`e}ze and Greenberg}{1980}]{DG80}
Jacques~H. Dr{\`e}ze and Joseph Greenberg.
\newblock Hedonic coalitions: optimality and stability.
\newblock {\em Econometrica}, 48(4):987--1003, 1980.

\bibitem[\protect\citeauthoryear{Easley and Kleinberg}{2010}]{EK10}
David~A. Easley and Jon~M. Kleinberg.
\newblock {\em Networks, Crowds, and Markets -- Reasoning about a Highly
  Connected World}.
\newblock Cambridge University Press, 2010.

\bibitem[\protect\citeauthoryear{Echzell \bgroup \em et al.\egroup
  }{2019}]{Echzell2019dynamics}
Hagen Echzell, Tobias Friedrich, Pascal Lenzner, Louise Molitor, Marcus Pappik,
  Friedrich Sch{\"{o}}ne, Fabian Sommer, and David Stangl.
\newblock Convergence and hardness of strategic {S}chelling segregation.
\newblock {\em CoRR}, abs/1907.07513, 2019.

\bibitem[\protect\citeauthoryear{Elkind \bgroup \em et al.\egroup
  }{2019}]{Elkind2019jump}
Edith Elkind, Jiarui Gan, Ayumi Igarashi, Warut Suksompong, and Alexandros~A.
  Voudouris.
\newblock Schelling games on graphs.
\newblock In {\em Proceedings of the 28th International Joint Conference on
  Artificial Intelligence ({IJCAI})}, pages 266--272, 2019.

\bibitem[\protect\citeauthoryear{Garey \bgroup \em et al.\egroup }{1974}]{G74}
Michael~R. Garey, David~S. Johnson, and Larry Stockmeyer.
\newblock Some simplified np-complete problems.
\newblock In {\em Proceedings of the 6th annual ACM Symposium on Theory of
  Computing ({STOC})}, pages 47--63, 1974.

\bibitem[\protect\citeauthoryear{Immorlica \bgroup \em et al.\egroup
  }{2017}]{IKLZ17}
Nicole Immorlica, Robert Kleinberg, Brendan Lucier, and Morteza Zadimoghaddam.
\newblock Exponential segregation in a two-dimensional {S}chelling model with
  tolerant individuals.
\newblock In {\em Proceedings of the 28th Annual {ACM-SIAM} Symposium on
  Discrete Algorithms ({SODA})}, pages 984--993, 2017.

\bibitem[\protect\citeauthoryear{Koutsoupias and Papadimitriou}{1999}]{KP99}
Elias Koutsoupias and Christos~H. Papadimitriou.
\newblock Worst-case equilibria.
\newblock In {\em Proceedings of the 16th Annual Symposium on Theoretical
  Aspects of Computer Science (STACS)}, pages 404--413, 1999.

\bibitem[\protect\citeauthoryear{Lieberson and Carter}{1982}]{LC82}
Stanley Lieberson and Donna~K. Carter.
\newblock A model for inferring the voluntary and involuntary causes of
  residential segregation.
\newblock {\em Demography}, 19(4):511--526, 1982.

\bibitem[\protect\citeauthoryear{Massand and
  Simon}{2019}]{Massand2019graphical}
Sagar Massand and Sunil Simon.
\newblock Graphical one-sided markets.
\newblock In {\em Proceedings of the 28th International Joint Conference on
  Artificial Intelligence ({IJCAI})}, pages 492--498, 2019.

\bibitem[\protect\citeauthoryear{Massey and Denton}{1988}]{MD88}
Douglas~S. Massey and Nancy~A. Denton.
\newblock The dimensions of residential segregation.
\newblock {\em Social Forces}, 67(2):281--315, 1988.

\bibitem[\protect\citeauthoryear{Schelling}{1969}]{S69}
Thomas~C. Schelling.
\newblock Models of segregation.
\newblock {\em American Economic Review}, 59(2):488--493, 1969.

\bibitem[\protect\citeauthoryear{Schelling}{1971}]{S71}
Thomas~C. Schelling.
\newblock Dynamic models of segregation.
\newblock {\em Journal of Mathematical Sociology}, 1(2):143--186, 1971.

\bibitem[\protect\citeauthoryear{Young}{2001}]{Y01}
H.~Peyton Young.
\newblock {\em Individual Strategy and Social Structure: an Evolutionary Theory
  of Institutions}.
\newblock Princeton University Press, 2001.

\bibitem[\protect\citeauthoryear{Zhang}{2004a}]{Z04a}
Junfu Zhang.
\newblock A dynamic model of residential segregation.
\newblock {\em Journal of Mathematical Sociology}, 28(3):147--170, 2004.

\bibitem[\protect\citeauthoryear{Zhang}{2004b}]{Z04b}
Junfu Zhang.
\newblock Residential segregation in an all-integrationist world.
\newblock {\em Journal of Economic Behavior and Organization}, 54(4):533--550,
  2004.

\end{thebibliography}

\end{document}